\newcommand{\bs}[1]{{\bm{#1}}}
\theoremstyle{definition}
\newtheorem{proposition}{Proposition}
\newtheorem{remark}{Remark}
\begin{document}

\title{Energy-Efficient Processing and Robust Wireless Cooperative Transmission for Edge Inference}

\author{Kai~Yang,~\IEEEmembership{Student Member,~IEEE,}
        Yuanming~Shi,~\IEEEmembership{Member,~IEEE,}
        Wei~Yu,~\IEEEmembership{Fellow,~IEEE}
        and~Zhi~Ding,~\IEEEmembership{Fellow,~IEEE}
\thanks{K. Yang is with the School of Information Science and Technology, ShanghaiTech University, Shanghai 201210, China, also with the Shanghai Institute of Microsystem and Information Technology, Chinese Academy of Sciences, Shanghai 200050, China, and also with the University of Chinese Academy of Sciences, Beijing 100049, China (e-mail: yangkai@shanghaitech.edu.cn).}
\thanks{Y. Shi is with the School of Information Science and Technology, ShanghaiTech University, Shanghai 201210, China (e-mail: shiym@shanghaitech.edu.cn).}
\thanks{Wei Yu is with the Electrical
and Computer Engineering Department, University of Toronto, Toronto,
ON M5S 3G4, Canada (e-mail: weiyu@comm.utoronto.ca).}
\thanks{Z. Ding is with the Department of Electrical and Computer Engineering,
University of California at Davis, Davis, CA 95616 USA (e-mail:
zding@ucdavis.edu).}
}

\maketitle
\IEEEpeerreviewmaketitle
\begin{abstract}
Edge machine learning can deliver low-latency and private artificial intelligent (AI) services for mobile devices by leveraging computation and storage resources at the network edge. This paper presents an energy-efficient edge processing framework to execute deep learning inference tasks at the edge computing nodes whose wireless connections to mobile devices are prone to channel uncertainties. Aimed at minimizing the sum of computation and transmission power consumption with probabilistic quality-of-service (QoS) constraints, we formulate a joint inference tasking and downlink beamforming problem that is characterized by a group sparse objective function. We provide a statistical learning based robust optimization approach to approximate the highly intractable probabilistic-QoS constraints by nonconvex quadratic constraints, which are further reformulated as matrix inequalities with a rank-one constraint via matrix lifting. We design a reweighted power minimization approach by iteratively reweighted $\ell_1$ minimization with difference-of-convex-functions (DC) regularization and updating weights, where the reweighted approach is adopted for enhancing group sparsity whereas the DC regularization is designed for inducing rank-one solutions. Numerical results demonstrate that the proposed approach outperforms other state-of-the-art approaches.
\end{abstract}

\begin{IEEEkeywords}
Edge intelligence, energy efficiency, robust communication, group sparse beamforming, robust optimization, difference-of-convex-functions
\end{IEEEkeywords}

\section{Introduction}\label{introduction}
Machine learning has transformed many aspects of our daily lives by taking advantage of abundant data and computing power in the cloud center. In particular, the strong capability of capturing the representations of data for detection or classification using deep neural networks \cite{lecun2015deep} has made impressive gains in face recognition, natural language processing tasks, etc. With the explosion of mobile data and the increasing edge computing capability, there is an emerging trend of \textit{edge intelligence} \cite{zhou2019edge,park2018edgeai}. Instead of uploading all data collected by mobile devices to the remote cloud data center, edge intelligence emphasizes the use of the computation and storage resources at network edges to provide low-latency and reliable artificial intelligent (AI) service \cite{kang2019incentive,kang2019reliable} for privacy/security sensitive devices, such as wearable devices, augmented reality, smart vehicles, and drones. However, since mobile devices are usually equipped with limited computation power, storage and energy \cite{park2018edgeai}, it is usually infeasible to deploy deep learning models, i.e., deep neural networks (DNNs), at resource-constrained mobile devices, and execute inference tasks locally. A promising solution is to enable processing at the mobile network access points to facilitate deep learning inference, which is termed as \textit{edge inference} \cite{xu2018scaling,xu2019edgesanitizer}. 

In this paper, we shall present the edge processing framework for edge inference (as illustrated in Fig. \ref{fig:framework}) that the input (e.g., a piece of rough doodle) of each mobile user is uploaded to wireless access points (e.g., base stations) served as edge computing nodes, each task is performed with pre-trained deep learning model (e.g., Nvidia's AI system GauGAN \cite{GauGANwebsite2019} for turning rough doodles into photorealistic landscapes) at multiple edge computing nodes, and the output results (e.g., landscape images) are transmitted to mobile users via coordinated beamforming among multiple access points. In such a system, the provisioning of wireless transmissions in both the uplink and the downlink are important design considerations.
In addition to the low-latency requirement, improving the energy efficiency \cite{sze2017efficient} is also critical due to the high computational complexity of processing DNNs, for which a number of works focusing on model compression methods \cite{han2015deep,Zhang_SPM18}.

\begin{figure}[h]
  \centering
  \includegraphics[width=\columnwidth]{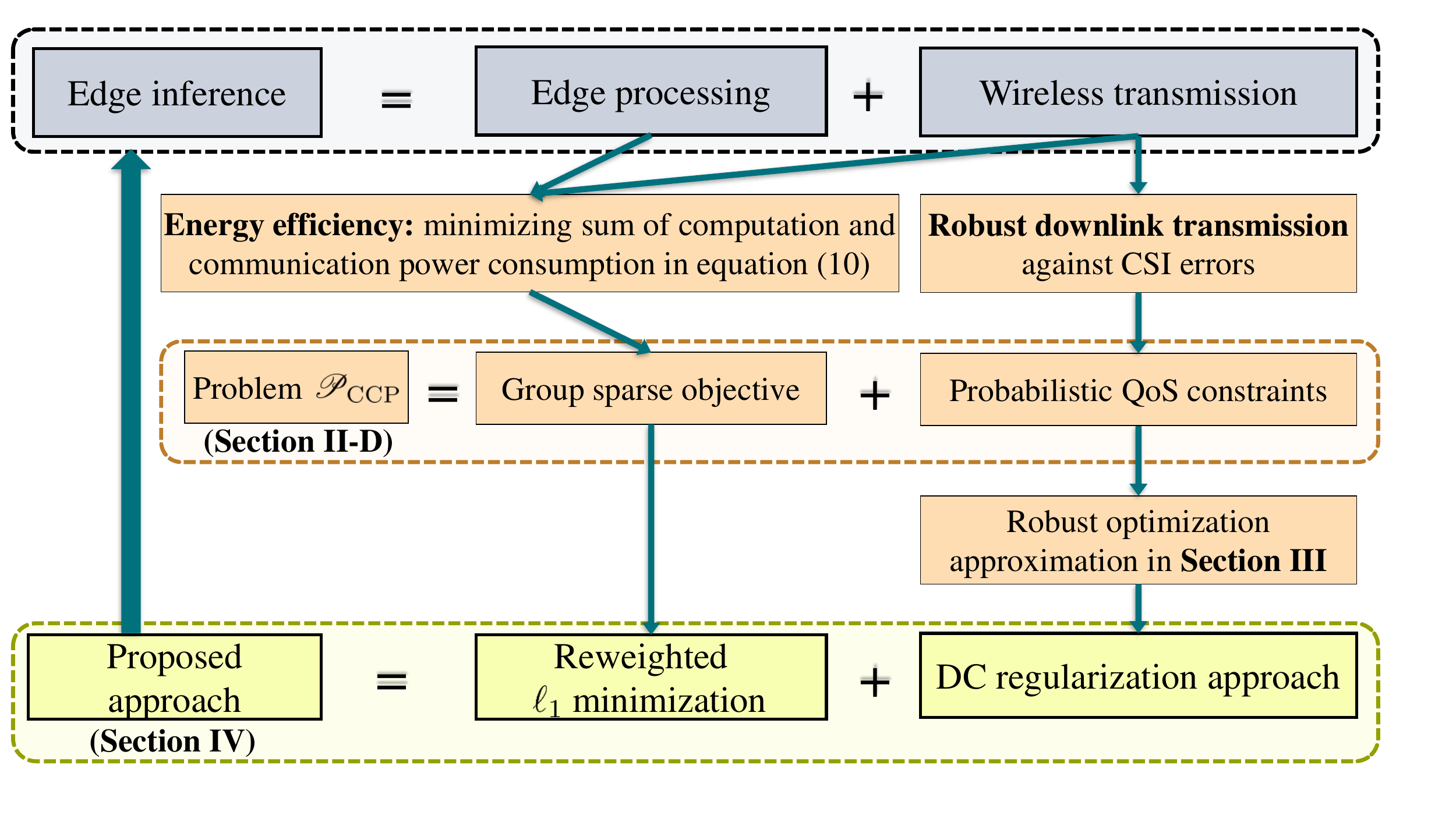}
  \caption{Illustration of our energy-efficient processing and robust wireless cooperative transmission framework for edge inference.}
  \label{fig:framework}
\end{figure}

There is a communication and computation tradeoff for the edge inference system in downlink. In particular, performing an inference task at more edge computing nodes can achieve higher quality-of-service (QoS) through cooperative downlink transmission for delivering the output results to mobile users. This however results in more computation power consumption for executing the deep learning models. We thus propose to jointly decide on the task allocation strategy at edge nodes and design downlink beamforming vectors by minimizing the sum of transmission power consumption and computation power consumption. In particular, the power consumption of deep learning inference tasks can be determined through the estimated energy \cite{yang2017designing} and computation time. We observe that there is an intrinsic connection between the group sparse structure \cite{Yuanming_TWC2014,tao2016content} of the downlink aggregative beamforming vector and the combinatorial variable, i.e., the set of tasks performed at edge nodes. The cooperative transmission strategies require global channel state information (CSI), while uncertainty in CSI acquisition is inevitable in practice due to training based channel estimation \cite{yangfuqian2018pilot}, limited feedback \cite{mo2018limited}, partial CSI acquisition \cite{shi2015optimal} and CSI acquisition delays \cite{maddah2012completely}. We thus formulate the joint task selection and downlink beamforming problem for energy-efficient processing and robust transmission against CSI errors in edge inference system as a group sparse beamforming problem with probabilistic-QoS constraints \cite{shi2015optimal}.

The joint chance constraints make the formulated probabilistic group sparse beamforming problem highly intractable since it has no closed-form expression generally. To address the chance-constrained programs, a number of works focus on finding computationally tractable approximations based on the collected samples of the random variables. A recognized scenario generation (SG) approach \cite{nemirovski2006convex} is proposed that uses a collection of sampled constraints to approximate the original chance constraints. However, SG is over-conservative since the volume of feasible region decreases by increasing the sample size, which leads to the deterioration of its performance. In addition, given the pre-specified probability $1-\epsilon$ and the confidence level $1-\delta$ for the probabilistic-QoS constraints, the required samples size of SG should satisfy $\sum_{i=1}^{NKL-1}\binom{T}{i}\epsilon^i(1-\epsilon)^{T-i}\leq \delta$, which increases roughly linearly with $1/\epsilon$. In \cite{shi2015optimal}, a stochastic optimization approach is provided to address the over-conservativeness of SG. However, its computational cost grows linearly with the sample size, which is not scalable for obtaining high-robustness solutions. Moreover, its statistical guarantee under finite sample size is still not available. 
To overcome limitations of existing methods, we present a robust optimization approximation approach for the joint chance constraints by enforcing the QoS constraints for any element within a high probability region. The high probability region is further determined by adopting a statistical learning \cite{hong2017learning} approach. This approach enjoys the benefits that the minimum required sample size is only $\log\delta/\log(1-\epsilon)$, and the computational cost is independent of the sample size. 

With the statistical learning based robust optimization approximation approach, the resulting robust group sparse beamforming problem has nonconvex quadratic constraints and a nonconvex group sparse objective function. We find that the nonconvex quadratic constraints can be convexified by matrix lifting and semidefinite relaxation (SDR) \cite{luo2007approximation}. Specifically, the nonconvex quadratic robust QoS constraints can be lifted as convex constraints in terms of a rank-one positive semidefinite matrix variable, which is then convexified by simply dropping the rank-one constraint. However, the SDR approach cannot guarantee that the obtained solution is feasible with respect to the original nonconvex quadratic constraints. The mixed $\ell_1/\ell_2$-norm \cite{bach2012optimization} is a well-known convex group sparsity inducing norm, which has been successfully applied in green cloud radio access networks \cite{Yuanming_TWC2014} and cooperative wireless cellular network \cite{sanjabi2014joint}. However, the SDR approach requires a quadratic form of the objective function, which makes the mixed $\ell_1/\ell_2$-norm minimization approach inapplicable. To overcome this problem, a quadratic variational form of weighted mixed $\ell_1/\ell_2$-norm is proposed in \cite{shi2015robust_TSP} to induce group sparsity. Note that \cite{shi2015robust_TSP} also considers a group sparse beamforming problem with nonconvex quadratic constraints. However, the performance of a quadratic variational form of weighted mixed $\ell_1/\ell_2$-norm minimization with SDR is still not satisfactory. 

To address the limitations of existing approaches, we propose a reweighted power minimization approach to enhance the group sparsity as well as improve the feasibility of nonconvex quadratic constraints. Specifically, we first adopt the iteratively reweighted $\ell_1$ minimization approach for enhancing group sparsity \cite{dai2016energy,shi2016smoothed}. To further guarantee the feasibility of the original nonconvex quadratic constraints, we exploit the matrix lifting technique to recast the nonconvex quadratic constraints as the convex constraints with respect to a rank-one positive semidefinite matrix, and propose a novel difference-of-convex-functions (DC) regularzation approach to induce rank-one solutions. Numerical results demonstrate that the proposed approach improves the probability of feasibility by avoiding the over-conservativeness of SG. Benefiting from both the reweighted $\ell_1$ minimization and the DC regularization, the proposed approach achieves a much lower total power consumption than the algorithm proposed in \cite{shi2015robust_TSP} and has a better capability of inducing group sparsity with nonconvex quadratic constraints.

\subsection{Contributions}
In this work, we consider an edge computing system to execute deep learning inference tasks for resource-constrained mobile devices. In order to provide energy-efficient processing and robust wireless cooperative transmission service for edge inference, we propose to jointly design the downlink beamforming vector and the set of inference tasks performed at each edge computing nodes under probabilistic-QoS constraints. We provide a statistical learning based robust optimization approximation for the highly intractable joint chance constraints, which guarantees that the probabilistic-QoS constraints are feasible with certain confidence level. The resulting problem turns out to be a group sparse beamforming problem with nonconvex quadratic constraints. We propose a reweighted power minimization approach based on the principles of iteratively reweighted $\ell_1$ minimization for group sparsity inducing, matrix lifting technique, and a novel DC representation for rank-one positive semidefinite matrices. The proposed approach can enhance group sparsity and induce rank-one solutions.

We summarize the major contributions of this paper as follows:
\begin{enumerate}
  \item We propose an energy-efficient processing and robust transmission approach for executing deep learning inference tasks at possibly multiple edge computing enabled wireless access points. The selection of optimal set of access points for each task is formulated as a group sparse beamforming problem with joint chance constraints.
  \item We provide a robust optimization counterpart to approximate the joint chance constraints followed by a statistical learning approach to learn the parameters from data samples of the random channel coefficients. It turns out a nonconvex group sparse beamforming problem with nonconvex quadratic constraints.
  \item We show that the nonconvex quadratic constraints can be reformulated as convex constraints with a rank-one constraint, where the rank-one constraint can be reformulated with a novel DC representation. 
  To enhance the group sparsity and inducing rank-one solutions, we propose a reweighted power minimization approach by iteratively reweighted $\ell_1$ minimization with DC regularization and updating weights.
  \item We conduct extensive numerical experiments to demonstrate the advantages of the proposed approach in providing energy-efficient and robust transmission service for edge inference.
\end{enumerate}

\subsection{Organization and Notations}
The rest of this work is organized as follows. In Section II, we introduce the system model and the power consumption model of edge inference, and formulate the energy-efficient processing and robust cooperative transmission problem as a group sparse beamforming problem with joint chance constraints. Section II provides a statistical learning based robust optimization approach to approximate the joint chance constraints. In Section IV, we design a reweighted power minimization approach for solving the robust group sparse beamforming problem. The simulation results are illustrated in Section V to demonstrate the superiority of the proposed approach over other state-of-the-art approaches. Finally, we conclude this work in Section VI.

Throughout this paper, we use lower-case bold letters (e.g., $\bs{v}$) to denote column vectors and letters with one subscript to denote their subvectors (e.g., $\bs{v}_{k}$). We further use lower-case bold letters with two subscripts to denote the subvectors of subvectors (e.g., $\bs{v}_{nk}$ is a subvector of $\bs{v}_k$). We denote scalars with lower-case letters,  matrices with capital letters (e.g., $\bs{V}$) and sets with calligraphic letters (e.g., $\mathcal{A}$). The conjugate transpose of a vector or matrix, $\ell_2$-norm of a vector and spectral norm of a matrix are denoted as $(\cdot)^{\sf{H}},\|\cdot\|_2$ and $\|\cdot\|$, respectively. Table \ref{tab:notations} summarizes the notations used in this paper.
\begin{table}[h]
\centering
        \begin{tabular}{|c|p{0.65\columnwidth}|}
        \hline
        Notation  & Explanation \\ \hline
        $N,K,L$ & the number of APs, MUs, and AP's antennas, respectively\\
        $[K]$ & the set of $\{1,\cdots,K\}$ \\
        $\mathcal{A}$ & task allocation of APs\\
        $P_{nk}^c$ & power consumption of performing the $k$-th user's task at the $n$-th AP  \\
        $P_n^{\text{Tx}}$ & maximum transmit power of the $n$-th AP \\
        $\eta_n$ & power amplifier efficiency \\
        $P^c$ & total computation power consumption at APs \\
        $P$ & total power consumption \\
        $\bs{v}_{nk},\bs{v}_{k},\bs{v}$ & beamforming vectors at the APs \\
        $\bs{V}_{ij}[s,t],\bs{V}_{ij},\bs{V}$ & lifted matrices of beamforming vectors \\
        $\bs{h}_{kn},\bs{h}_{k},\bs{h}$ & downlink channel coefficient vectors between APs and MUs \\
        $\hat{\bs{h}}_{kn},\hat{\bs{h}}_k,\hat{\bs{h}}$ & estimated channel coefficient vectors\\
        $\bs{e}_{kn},\bs{e}$ & random errors of CSI \\
        $\gamma_k,\zeta$ & the target QoS and its target tolerance level\\
        $\epsilon,\delta$ & the tolerance level and its confidence level \\
        $\mathcal{U}_k$ & high probability region of $\bs{h}_k$ \\
        $\mathcal{D}$ & the data set consisting of $D$ i.i.d. samples of $\bs{h}$ \\
        $\mathcal{D}^1,\mathcal{D}^2$ & the partitioned two parts of the data set $\mathcal{D}$ with size $D_1$ and $D_2=D-D_1$, respectively \\
        $\tilde{\bs{h}}^{(j)}$ & the $j$-th data sample \\
        $q_{1-\epsilon}$ & $(1-\epsilon)$-quantile
        \\\hline
        \end{tabular}\vspace{0.5em}
\caption{Notations used in the paper}
\label{tab:notations}
\end{table}

\section{System Model and Problem Formulation}
This section provides the system model and power consumption model of edge inference for deep neural networks, followed by the proposal of the energy-efficient edge processing under probabilistic-QoS constraints. 

\subsection{System Model}

Consider the edge processing network consisting of $N$ $L$-antenna edge computing enabled wireless access points (APs) and $K$ single-antenna mobile users (MUs), as shown in Fig. \ref{fig:system}. Each MU $k$ has a deep learning inference task $\phi_k(d_k)$ with input $d_k$. Instead of relying on a cloud data center, we execute deep learning tasks at the APs to address latency and privacy concerns for high-stake applications such as drones and smart vehicles \cite{park2018edgeai}. In this paper, we propose to store the trained deep neural network (DNN) models $\phi_k$'s to APs in advance. Each AP collects all inputs $\{d_k\}_{k=1}^{K}$ from each MU in the first phase. In the second phase,  each AP will selectively execute some inference tasks and transmit the output results to the MUs through cooperative downlink transmission, thereby providing low-latency intelligent services for MUs. The point is that the same inference task can be executed at multiple APs, so that the multiple APs can jointly transmit the result to the MUs through beamforming, thus improving the downlink transmission efficiency (at the expenses of the larger energy consumption due to executing the same task at multiple APs.) This paper focuses on the joint task selection and downlink transmit beamforming problem in the second phase.

\begin{figure}[h]
  \centering
  \includegraphics[width=\columnwidth]{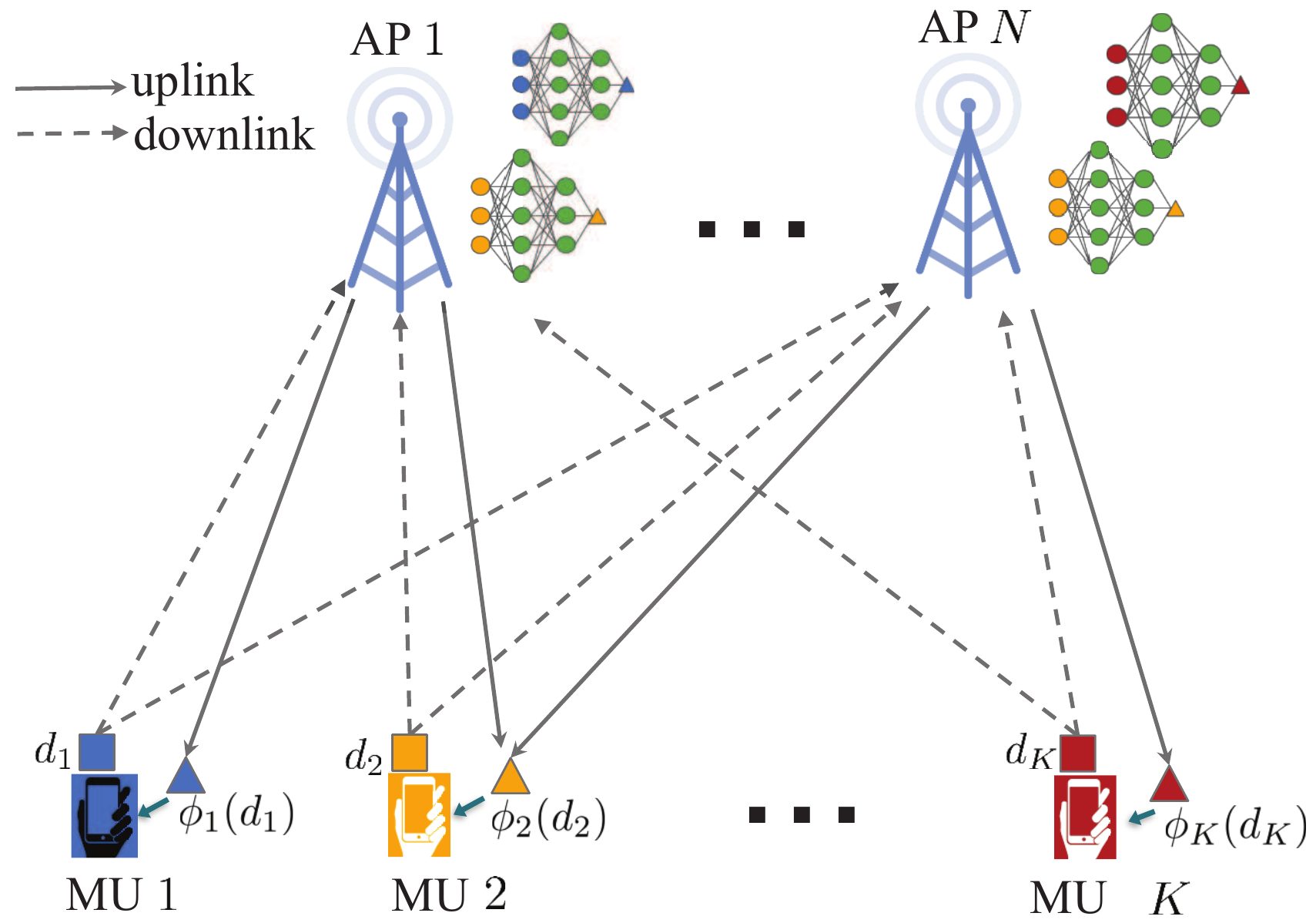}
  \caption{System model of edge inference for deep neural networks. This papper focuses on the computing and downlink transmission phase.}
  \label{fig:system}
\end{figure}

Let $\phi_k(d_k)$ be the requested output for MU $k$, $s_k\in\mathbb{C}$ be the encoded scalar to be transmitted, and $\bs{v}_{nk}\in\mathbb{C}^{L}$ be the beamforming vector for message $\phi_k(d_k)$ at the $n$-th AP. We consider the downlink communication scenario, where all inputs $d_k$'s have already been collected at APs. Then the received signal at MU $l$ is given by
\begin{equation}
    y_k = \sum_{n=1}^{N}\sum_{l=1}^{K}\bs{h}_{kn}^{\sf{H}}\bs{v}_{nl}s_l+z_k, \label{eq:io1}
\end{equation}
where $\bs{h}_{kn}\in\mathbb{C}^{L}$ is the channel coefficient vector between the $n$-th AP and the $k$-th MU, $z_k\sim\mathcal{CN}(0,\sigma_k^2)$ is the additive isotropic white Gaussian noise. Suppose all data symbols $s_k$'s are mutually independent with unit power, i.e., $\mathbb{E}[|s_k|^2]=1$, and also independent with the noise. Denote $[K]$ as the set $\{1,\cdots,K\}$. Let $\mathcal{A}\subseteq\{(n,k):n\in[N],k\in[K]\}$ denote a feasible allocation for the inference tasks on APs, i.e., computational task $\phi_k$ shall be performed at the $n$-th AP for $(n,k)\in\mathcal{A}$. In term of the group sparsity structure of the aggregative beamforming vector 
\begin{equation}
\bs{v}=[\bs{v}_{11}^{\sf{H}},\cdots,\bs{v}_{N1}^{\sf{H}},\cdots,\bs{v}_{NK}^{\sf{H}}]^{\sf{H}}\in\mathbb{C}^{NKL},    
\end{equation}
we have that if the inference task $k$ will not be performed at AP $n$, i.e., $(n,k)\notin\mathcal{A}$, the beamforming vector $\bs{v}_{nk}$ will  be set as zero. Let $\mathcal{T}(\bs{v})$ be the group sparsity pattern of $\bs{v}$ given as 
\begin{equation}
    \mathcal{T}(\bs{v})=\{(n,k)|\bs{v}_{nk}\ne\bs{0}\}.
\end{equation}
The signal-to-interference-plus-noise-ratio (SINR) for mobile device $k$ is given by
 \begin{equation}
     \textrm{SINR}_k(\bs{v};\bs{h}_k) = \frac{|\bs{h}_{k}^{\sf{H}}\bs{v}_{k}|^2}{\sum_{l\ne k}|\bs{h}_{k}^{\sf{H}}\bs{v}_{l}|^2+\sigma_k^2},
 \end{equation}
 where $\bs{h}_{k}$ and $\bs{v}_k$ are given by
 \begin{align}
     \bs{h}_{k}&=[\bs{h}_{k1}^{\sf{H}},\cdots,\bs{h}_{kN}^{\sf{H}}]^{\sf{H}}\in\mathbb{C}^{NL}, \\
     \bs{v}_{k}&=\begin{bmatrix}
    \bs{v}_{1k}^{\sf{H}} & \cdots & \bs{v}_{Nk}^{\sf{H}}
\end{bmatrix}^{\sf{H}}\in\mathbb{C}^{NL},
 \end{align}
and the aggregative channel coefficient vector is denoted as
\begin{equation}
\bs{h}=[\bs{h}_{1}^{\sf{H}},\cdots,\bs{h}_{K}^{\sf{H}}]^{\sf{H}}\in\mathbb{C}^{NKL}.      
\end{equation}  
The transmit power constraint at the $n$-th AP is given by
\begin{equation}
    \mathbb{E}\left[\sum_{l=1}^{K}\|\bs{v}_{nl}s_l\|_2^2\right] = \sum_{l=1}^{K}\|\bs{v}_{nl}\|_2^2\leq P_n^{\text{Tx}}, n\in[N], \label{constraint:transpower}
\end{equation}
where $P_n^{\text{Tx}}$ is the maximum transmit power.

\subsection{Power Consumption Model}
Although widespread applications of deep learning present numerous opportunities for intelligent systems, energy consumption becomes one of the main concerns \cite{xu2018scaling}. Indeed, the energy consumption of performing DNN inference is dominated by the memory access. As pointed out in \cite{han2015deep}, a memory access of 32 bit dynamic random access memory (DRAM) consumes 640pJ, while a cache access of 32 bit static random access memory (SRAM) consumes 5pJ and a 32 bit floating point add operation consumes 0.9pJ. Large DNN models probably cannot fit in the storage of mobile device, which requires more costly DRAM memory accesses. Therefore, small models can be directly deployed on mobile devices but large models are preferably executed at the powerful edge nodes.
Let the power consumption of computing task $\phi_k$ at the $n$-th edge computing node be $P_{nk}^{\text{c}}$. The total computation power consumption for all edge computing nodes is thus given by
\begin{equation}
  P^{\text{c}}=\sum_{n,k}P_{nk}^{\text{c}}I_{(n,k)\in\mathcal{T}(\bs{v})},
\end{equation}
where the indicator function $I$ is $1$ if $(n,k)\in\mathcal{T}(\bs{v})$ and $0$ otherwise. Therefore, the total power consumption consists of transmission power consumption for output results delivery and computation power consumption for deep learning tasks execution, which is given by
\begin{equation}
    P =\sum_{n,k}\frac{1}{\eta_n}\|\bs{v}_{nk}\|_2^2+ \sum_{n,k}P_{nk}^{\text{c}}I_{(n,k)\in\mathcal{T}(\bs{v})},
\end{equation}
where $\eta_n$ is the power amplifier efficiency. 

Deep neural networks especially deep convolutional neural networks (CNNs) becomes an indispensable and the state-of-the-art paradigm for real-world intelligent services. Its high energy cost has attracted much interest in designing energy-efficient structures of neural networks \cite{han2015deep}. Estimating the energy consumption of a neural network is thus critical for inference at the edge, for which an estimation tool is developed in \cite{EnergyEstimationWebsite}. The energy consumption of performing an inference task consists of the computation part and the data movement part \cite{yang2017designing}. The computation energy consumption can be calculated by counting the number of multiply-and-accumulate (MACs) in the layer and weighing it with the energy consumption of each MAC operation in the computation core. The energy consumption of data movement is calculated by counting the number of accessing memory at each level of the memory hierarchy in the corresponding hardware and weighing it with the energy consumption of accessing the memory in the corresponding level. 

Here we illustrate how to estimate the computation power consumption of performing image classification tasks using the classic CNN (i.e., AlexNet consisting of $5$ convolutional layers and $3$ fully-connected layers) on the Eyeriss chip. The energy estimation tool takes network configuration as input and outputs the estimated energy breakdown of each layer in terms of computation part and the data movement part of three data types(weight, input feature map, output feature map). Figure \ref{fig:EnergyDecomp} demonstrates the estimated energy of each layer running on Eyeriss chip, and the overall energy consumption is the sum of four parts. The unit of energy is normalized by the energy for one MAC. Based on the total energy consumption, the computation power consumption can be further determined via dividing the energy consumption by the computation time.

\begin{figure}[!t]
  \centering
  \includegraphics[width=\columnwidth]{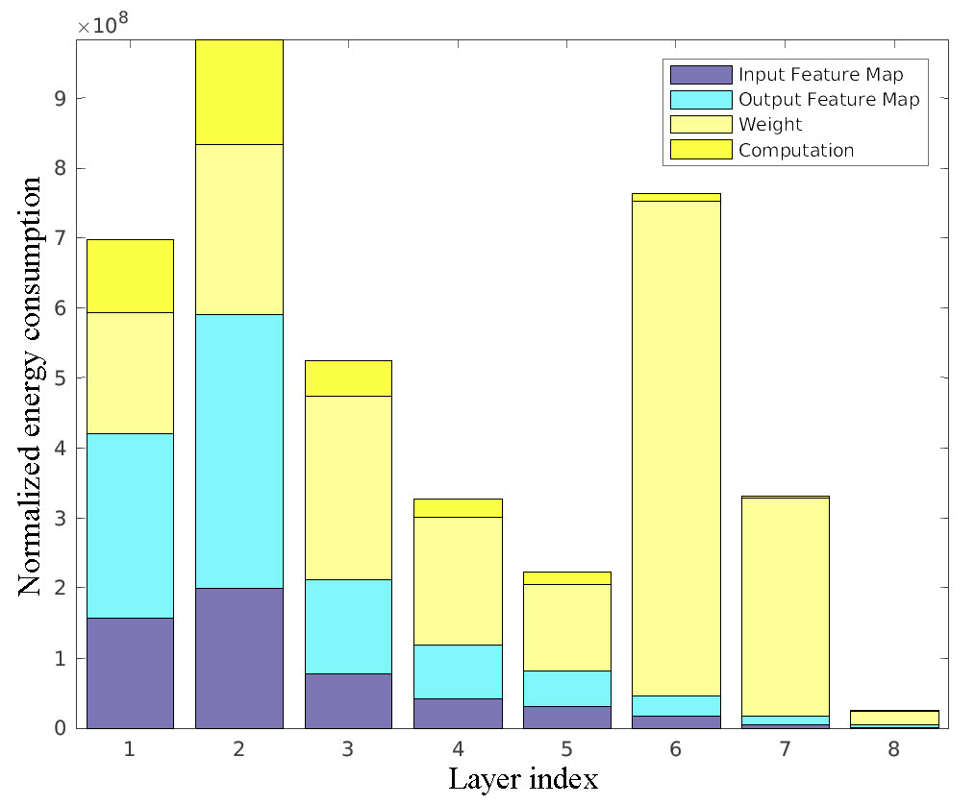}
  \caption{Energy consumption breakdown of the AlexNet \cite{krizhevsky2012imagenet}. The unit of energy is normalized by the energy for one MAC operation (i.e., $10^2$ = energy of 100 MACs).}
  \label{fig:EnergyDecomp}
\end{figure}

\subsection{Channel Uncertainty Model}\label{subsec:channel_uncertainty_model}
For high-stake intelligent applications such as autonomous driving and automation, robustness is a critical requirement. In practice, inevitably there is uncertainty in the available channel state information (CSI) $\bs{h}$, which is taken into consideration to provide robust transmission in this paper. It may originate from training based channel estimation \cite{yangfuqian2018pilot}, limited precision of feedback \cite{mo2018limited}, partial CSI acquisition \cite{shi2015optimal} and delays in CSI acquicition \cite{maddah2012completely}. In this work, we adopt the additive error model \cite{liu2018energy,fang2017joint} of the channel imperfection, i.e.,
\begin{equation}\label{eq:additive_model}
    \bs{h}=\hat{\bs{h}}+\bs{e},
\end{equation}
where $\hat{\bs{h}}\in\mathbb{C}^{NKL}$ is the estimated aggregative channel vector and $\bs{e}\in\mathbb{C}^{NKL}$ is the random errors of the CSI with unknown distribution and expectation as $\bs{0}$. We apply the probabilistic quality-of-service (QoS) constraints \cite{shi2015optimal} to characterize the robustness of delivering the inference results to MUs 
\begin{equation}\label{eq:pcr}
    \textrm{Pr}\left(\textrm{SINR}_{k}(\bs{v};\bs{h}_k) \geq \gamma_k\right)\geq 1-\zeta, \forall k\in[K].
\end{equation}
Here $\zeta$ is the tolerance level and $``\textrm{SINR}_{k} \geq \gamma_k"$ is called safe condition. 

\subsection{Problem Formulation}
In the proposed edge processing framework for deep learning inference tasks, there is a fundamental tradeoff between computation and communication. Specifically, executing the same inference task at multiple edge nodes will require higher computation power consumption, while the downlink transmission power consumption shall be reduced due to the cooperative transmission gains. In this paper, we propose an energy-efficient processing and robust transmission approach to minimize the total network power consumption, while satisfying the probabilistic QoS constraints and transmit power constraints. It is formulated as the following probabilistic group sparse beamforming problem:
\setlength\arraycolsep{1.5pt}
\begin{eqnarray}
    \!\!\!\!\!\!\!\!\!\mathscr{P}_{\text{CCP}}:\!\!\mathop{\textrm{min}}_{\bs{v}\in\mathbb{C}^{NKL}} && \!\!\sum_{n,k}\frac{1}{\eta_n}\|\bs{v}_{nk}\|_2^2+ \sum_{n,k}P_{nk}^{\text{c}}I_{(n,k)\in\mathcal{T}(\bs{v})}\nonumber \\
    \textrm{s.t.~~~~} && \!\!\!\!\textrm{Pr}\left(\textrm{SINR}_{k}(\bs{v};\bs{h}_k) \geq \gamma_k\right)\geq 1-\zeta, k\in[K] \label{constraint:probQoS} \\
    && \sum_{k=1}^{K}\|\bs{v}_{nk}\|_2^2\leq P_n^{\text{Tx}}, n\in[N].
\end{eqnarray}

\begin{remark}
    In edge inference, data privacy is another main concern for high-stake applications such as smart vehicles and drones. Mobile users in these applications may be reluctant to send their raw data to APs. To avoid the exposure of raw data, hierarchical distributed structure has been studied in the literature, such as \cite{li2019edge}, by determining a partition point of a DNN model and deploying the partitioned model across the mobile device and the edge computing enabled AP. The data privacy is protected since only the output of the layers before the partition point is uploaded to APs. Note that our proposed framework is also applicable to the privacy-preserving hierarchical distributed structure. In this case, the input $d_k$ becomes the locally computed output of the layers before the partition point. The computation task $\phi_k$ becomes the task of computing the inference result with the layers after the partition point.
\end{remark}

To achieve the robustness of QoS against CSI errors, we shall collect $D$ i.i.d. (independent and identically distributed) samples of the imperfect channel state information as the data set $\mathcal{D}=\{\tilde{\bs{h}}^{(1)},\cdots,\tilde{\bs{h}}^{(D)}\}$ to learn the uncertainty model of CSI before providing edge inference service. Based on the data set $\mathcal{D}$, we aim to design a beamforming vector $\bs{v}$ such that the safe condition is satisfied with probability at least $1-\zeta$. However, since we do not know the prior distribution of random errors, the statistical guarantee of a given approach is usually expressed as certain confidence level $1-\delta$ for certain tolerance level $1-\epsilon$, e.g., the scenario generation approach \cite{nemirovski2006convex}. That is, the confidence level of 
\begin{equation}\label{relaxed_Prob_QoS}
\textrm{Pr}\left(\textrm{SINR}_{k}(\bs{v};\bs{h}_k) \geq \gamma_k\right)\geq 1-\epsilon    
\end{equation}
is no less than $1-\delta$ for some $\bs{v}$, $D$, $0<\epsilon<1$ and $0<\delta<1$. Thus the violation probability of the safe condition is upper bounded by
\begin{equation}
    \textrm{Pr}(\textrm{SINR}_{k}(\bs{v};\bs{h}_k) < \gamma_k)< \delta + \epsilon(1-\delta). 
\end{equation}
By setting $\epsilon$ and $\delta$ such that $\zeta>\delta+\epsilon(1-\delta)$, the safe condition (\ref{eq:pcr}) is guaranteed to be met.

We consider the block fading channel where the channel distribution is  assumed invariant \cite{liu2019two} within $T_s$ blocks and the channel coefficient vector remains unchanged within each block. Note that the training by collecting $D$ channel samples within each block will result in high signaling overhead. We will show that our proposed approach for addressing the probabilistic-QoS constraints can be intergrated with a cost-effective channel sampling strategy in Section \ref{subsec:sampling_strategy}.
\subsection{Problem Analysis}\label{subsec:problem_analysis}
Directly solving the joint chance constraints (\ref{constraint:probQoS}) is usually a highly-intractable task \cite{nemirovski2006convex}, especially when there is no exact knowledge about the uncertainty. In this work, we shall propose a general framework for edge inference without assuming the prior distribution of random errors. A natural idea is to find a computationally tractable approximation for the probabilistic QoS constraints (\ref{constraint:probQoS}).

\subsubsection{Scenario Generation}
Scenario generation \cite{nemirovski2006convex} is a well-known approach by obtaining $D$ independent samples of the random channel coefficient vector $\bs{h}$ and imposing the target QoS constraints $\textrm{SINR}_{k} \geq \gamma_k,k\in[K]$ for each sample. However, because it ensures robustness in the minimax sense, it is too conservative when a large number of samples are drawn, since the volumn of feasible region will decrease, which may result in the infeasibility of problem $\mathscr{P}_{\text{CCP}}$. In addition, the sample size $D$ should be chosen such that $\sum_{i=1}^{NKL-1}\binom{D}{i}\epsilon^i(1-\epsilon)^{D-i}\leq \delta$, where $1-\delta$ gives the confidence level for the probabilistic-QoS constraints defined in equation (\ref{eq:pcr}). Therefore, the scenario generation approach has scalability issue since the required minimum sample size $D$ increases roughly linearly with $1/\epsilon$ for small $\epsilon$ and also with $NKL$. 

\subsubsection{Stochastic Programming}
To address this over-conservativeness issue of the scenario generation approach, a stochastic programming approach is further provided in \cite{shi2015optimal} by finding a difference-of-convex-functions (DC) approximation for the chance constraints. The resulting DC constrained stochastic program can be solved by successive convex approximation with the Monte Carlo approach at each iteration. However, its computation cost grows linearly with the number of samples $D$ which is not scalable for obtaining high-robustness solutions, and the statistical guarantee is not available for the joint chance constraints under finite sample size.

To address the limitations of the existing works, we shall present a robust optimization approach in Section \ref{sec:ropt} to approximate the chance constraint via a statistical learning approach\cite{hong2017learning}. This approach enjoys the main advantages that the minimum required number of observations is only $\log\delta/\log(1-\epsilon)$ and the computational cost is independent of the sample size.

\section{Learning-Based Robust Optimization Approximation for Joint Chance Constraints}\label{sec:ropt}
In this section, we provide a robust optimization approximation for the joint chance constraints in problem $\mathscr{P}_{\text{CCP}}$, followed by a statistical learning approach to learn the shape and size of the high probability region. 


\subsection{Approximating Joint Chance Constraints via Robust Optimization}
Robust optimization \cite{hong2017learning} uses safe approximation and imposes that the safe conditions are always satisfied when the random variables lie in some geometric set. Specifically, the robust optimization approximation of the joint chance constraints (\ref{constraint:probQoS}) is given by
\begin{equation}
\textrm{SINR}_{k}(\bs{v};\bs{h}_k) \geq \gamma_k,  \bs{h}_k \in \mathcal{U}_k,\forall k \in [K]\label{constraint:RO_QoS}
\end{equation}
where $\mathcal{U}_k$ is the high probability region that $\bs{h}_k$ lies in. 
The robust optimization approximation for the joint chance constraints should yield a solution such that the probabilistic QoS constraint is satisfied with high confidence. The robust optimization approximation approach is realized by constructing a high probability region $\mathcal{U}_k$ from the data set $\mathcal{D}$ such that $\mathcal{U}_k$ covers a $1-\epsilon$ content of $\bs{h}_k$, i.e., 
\begin{equation}\label{eq:prob_cover}
\textrm{Pr}(\bs{h}_k \in \mathcal{U}_k)\geq 1-\epsilon,
\end{equation}
with confidence level at least $1-\delta$. More precisely, since $\mathcal{U}_k$ is generated from data and therefore is random, we require that the proportion of time (\ref{eq:prob_cover}) is satisfied to be at least $1-\delta$ in the repeated application of the data generation and high probability region construction procedure. By imposing the QoS constraints for element in the high probability region as presented in equation (\ref{constraint:RO_QoS}), the confidence level for the probabilistic-QoS constraints (\ref{relaxed_Prob_QoS}) will be at least $1-\delta$.
We thus obtain the robust optimization approximation for problem $\mathscr{P}_{\text{CCP}}$ as
\setlength\arraycolsep{2pt}
\begin{eqnarray}
    \!\!\!\mathscr{P}_{\text{RO}}:\mathop{\textrm{minimize}}_{\bs{v},\bs{h}} && \sum_{n,k}\frac{1}{\eta_n}\|\bs{v}_{nk}\|_2^2+ \sum_{n,l}P_{nk}^{\text{c}}I_{(n,k)\in\mathcal{T}(\bs{v})}\nonumber \\
    \textrm{subject to} && \textrm{SINR}_{k}(\bs{v};\bs{h}_k) \geq \gamma_k, \bs{h}_k \in \mathcal{U}_k, k\in[K] \nonumber \\
    && \sum_{k=1}^{K}\|\bs{v}_{nk}\|_2^2\leq P_n^{\text{Tx}}, n\in[N].
\end{eqnarray}

The choice of the geometric shape of the uncertainty set $\mathcal{U}_k$ is critical to the performance and the tracatability of the robust optimization approximation. Motivated by the tractability of robust optimization, ellipsoids and polytopes are commonly chosen as the basic uncertainty sets. The uncertainty set can be further augmented as the unions or intersection of these basic sets. In this paper, we choose ellipsoidal uncertainty set to model the uncertainty of each group of channel coefficient vector $\bs{h}_{k}$ for its wide use in modeling CSI uncertainties \cite{shi2015robust_TSP,hanif2013efficient}, as well as its tractability as shown in Section \ref{subsec:tractable_reformulation}. The high probability region $\mathcal{U}_k$ is parameterized as
\begin{equation}
    \mathcal{U}_{k}=\{\bs{h}_{k}:\bs{h}_{k}=\hat{\bs{h}}_{k}+\bs{B}_{k}\bs{u}_{k}, \bs{u}_{k}^{\sf{H}}\bs{u}_{k} \leq 1\}. \label{eq:uncertainty_set1}
\end{equation}
Here the parameters $\bs{B}_{k}\in\mathbb{C}^{NL\times NL}$ and $\hat{\bs{h}}_{k}\in\mathbb{C}^{NL}$ shall be learned from the data set $\mathcal{D}$, which will be presented in Section \ref{subsec:learninguncertainty}. We will then present the tractable reformulation of the robust optimization counterpart problem $\mathscr{P}_{\text{RO}}$ in Section \ref{subsec:tractable_reformulation}.

\subsection{Learning the High Probability Region from Data Samples}\label{subsec:learninguncertainty}
Note that (\ref{constraint:RO_QoS}) only gives a feasibility guarantee for the joint chance constraints with statistical confidence at least $1-\delta$, but its conservativeness is still a challenging problem. Generally speaking, problem $\mathscr{P}_{\text{RO}}$ is a less conservative approximation for problem $\mathscr{P}_{\text{CCP}}$ if it has a larger feasible region. Therefore, we prefer a smaller volume of the high probability region $\mathcal{U}$ which provides a larger feasible region. In our problem formulation, we set the volume of the high probability region such that the statistical confidence for the probabilistic-QoS constraints is close to $1-\delta$.

In this paper, we propose to use a statistical learning approach \cite{hong2017learning} for the parameters of the high probability region $\mathcal{U}$, which consists of a shape learning procedure and a size calibration procedure via quantile estimation. First of all, we split the samples in data set $\mathcal{D}$ into two parts, i.e., $
\mathcal{D}^{1}=\{\tilde{\bs{h}}^{(1)},\cdots,\tilde{\bs{h}}^{(D_1)}\}$ and $
\mathcal{D}^{2}=\{\tilde{\bs{h}}^{(D_1+1)},\cdots,\tilde{\bs{h}}^{(D)}\}$, each for one procedure.

\subsubsection{Shape Learning}
Each ellipsoid set $\mathcal{U}_{k}$ can be re-parameterized as
\begin{equation}
    \mathcal{U}_{k}=\{\bs{h}_{k}:(\bs{h}_{k}-\hat{\bs{h}}_{k})^{T}\bs{\Sigma}_{k}^{-1}(\bs{h}_{k}-\hat{\bs{h}}_{k})\leq s_{k}\},
\end{equation}
where $\hat{\bs{h}}_{k}$ and $\bs{\Sigma}_{k}$ are shape parameters of the ellipsoid $\mathcal{U}_{k}$, $s_{k}>0$ determines its size, and $\bs{\Sigma}_{k}/{s}_{k}=\bs{B}_{k}\bs{B}_{k}^{\sf{H}}$. Suppose the observations of $\bs{h}_{k}$ is given by $\mathcal{D}_{k}=\mathcal{D}_{k}^{1}\cup \mathcal{D}_{k}^{2}=\{\tilde{\bs{h}}_{k}^{(j)}\}_{j=1}^{D}$. The shape parameter $\hat{\bs{h}}_{k}$ can be chosen as the sample mean, i.e.,
\begin{align}
  \hat{\bs{h}}_{k}&=\frac{1}{D_1}\sum_{j=1}^{D_1}\tilde{\bs{h}}_{k}^{(j)},\label{eq:HPR_mean}
\end{align}
To reduce the complexity of the ellipsoid, we omit the correlation between each $\{\bs{h}_{kn}\}$ and choose $\bs{\Sigma}_{k}$ as the block diagonal matrix where each diagonal element is the sample covariance of the first part of the data set for $\bs{h}_{kn}$, i.e., 
\begin{align}
     \bs{\Sigma}_{k}&=\begin{bmatrix}
         \bs{\Sigma}_{k1} & &  \\
         & \ddots &  \\
         & & \bs{\Sigma}_{kN}
     \end{bmatrix}, ~\text{where}~~\nonumber\\
     \bs{\Sigma}_{kn}&=\frac{1}{D_1-1}\sum_{j=1}^{D_1}(\tilde{\bs{h}}_{kn}^{(j)}-\hat{\bs{h}}_{kn})(\tilde{\bs{h}}_{kn}^{(j)}-\hat{\bs{hn}}_{k})^{\sf{H}}.\label{eq:HPR_covariance}
\end{align}

\subsubsection{Size Calibration via Quantile Estimation}
We then use the second part of data set $\mathcal{D}_{k}^2$ for calibrating the ellipsoid size $s_{k}$. The key idea is to estimate a $1-\epsilon$ quantile with $1-\delta$ confidence of a transformation of the data samples in $\mathcal{D}_{k}^2$. Let 
\begin{equation}\label{eq:map}
  \mathcal{G}(\xi)=(\xi-\hat{\bs{h}}_{k})^{T}\bs{\Sigma}_{k}^{-1}(\xi-\hat{\bs{h}}_{k})
\end{equation}
be the map from the random space that $\bs{h}_{k}$ lies in to $\mathbb{R}$. The size parameter $s_{k}$ will be chosen as an estimated $(1-\epsilon)$-quantile of the underlying distribution of $\mathcal{G}(\xi)$ based on the data samples in $\mathcal{D}_{nk}^2$, where the $(1-\epsilon)$-quantile $q_{1-\epsilon}$ is defined from
\begin{equation}
    \textrm{Pr}(\mathcal{G}(\xi)\leq q_{1-\epsilon}) = 1-\epsilon.
\end{equation}
Specifically, by computing the function values of $\mathcal{G}$ on each sample of $\mathcal{D}_{k}^2$, we can obtain the observations $G_{1},\cdots,G_{D-D_1}$ where $G_{j}=\mathcal{G}(\bs{h}_{k}^{(D_1+j)})$. Then the $t^\star$-th value of the ranked observations $G_{(1)}\leq \cdots \leq G_{(D-D_1)}$ in ascending order, denoted as $G_{(j^\star)}$, can be an upper bound of the $(1-\epsilon)$-quantile of the underlying distribution of $G(\xi)$ based on the following proposition:
\begin{proposition}
  $s_{k}$ is an upper bound of the $(1-\epsilon)$-quantile of the underlying distribution with $1-\delta$ confidence, i.e.,
\begin{equation}
\textrm{Pr}(s_{k}\geq q_{1-\epsilon})\geq 1-\delta,
\end{equation}
if $s_{k}$ is set as
\begin{align}
&s_{k} = G_{(j^\star)},~~\text{where}~j^\star~\text{is given by} \nonumber\\
&\min_{1\leq j\leq D-D_1}\!\!\left\{j:\sum_{k=0}^{j-1}\binom{D-D_1}{k}(1-\epsilon)^k\epsilon^{D-D_1-k}\geq 1-\delta  \right\}.\label{eq:HPR_size}
\end{align}
\end{proposition}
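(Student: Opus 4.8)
The plan is to recognize this as a classical order-statistics confidence bound for a quantile, and to prove it via the fact that the rank of a fixed quantile among i.i.d.\ samples is governed by a binomial distribution. First I would fix attention on the random variables $G_1,\dots,G_{D-D_1}$, which are i.i.d.\ copies of $\mathcal{G}(\bs{h}_k)$ since they are obtained by applying the \emph{deterministic} map $\mathcal{G}$ (built from $\mathcal{D}_k^1$) to the independent second-part samples $\mathcal{D}_k^2$; conditioning on $\mathcal{D}_k^1$ makes $\mathcal{G}$ fixed, so the $G_j$ are genuinely i.i.d.\ and the argument goes through conditionally, hence unconditionally. Let $q_{1-\epsilon}$ denote the true $(1-\epsilon)$-quantile, so $\textrm{Pr}(G_j \le q_{1-\epsilon}) = 1-\epsilon$ (using the continuity implicit in the definition; otherwise one works with $\textrm{Pr}(G_j \le q_{1-\epsilon}) \ge 1-\epsilon$ and the inequalities only help).

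Next I would express the target event $\{s_k \ge q_{1-\epsilon}\} = \{G_{(j^\star)} \ge q_{1-\epsilon}\}$ in terms of a counting variable. The key observation is the equivalence
\begin{equation}
G_{(j^\star)} < q_{1-\epsilon} \iff \#\{\, j : G_j < q_{1-\epsilon}\,\} \ge j^\star,
\end{equation}
i.e., the $j^\star$-th smallest sample falls below the quantile if and only if at least $j^\star$ of the samples do. Let $M := \#\{\, j : G_j \le q_{1-\epsilon}\,\}$; then $M \sim \mathrm{Binomial}(D-D_1,\, 1-\epsilon)$, and (modulo the ties issue above) $\textrm{Pr}(G_{(j^\star)} < q_{1-\epsilon}) \le \textrm{Pr}(M \ge j^\star)$. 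Therefore
\begin{equation}
\textrm{Pr}(s_k \ge q_{1-\epsilon}) \ge 1 - \textrm{Pr}(M \ge j^\star) = \textrm{Pr}(M \le j^\star - 1) = \sum_{k=0}^{j^\star-1}\binom{D-D_1}{k}(1-\epsilon)^k\epsilon^{D-D_1-k}.
\end{equation}
By the defining property of $j^\star$ in \eqref{eq:HPR_size}, this last sum is at least $1-\delta$, which is exactly the claim. I would also remark that $j^\star$ is well defined because the sum is monotone nondecreasing in $j$ and reaches $1$ at $j = D-D_1+1$, so the minimizing set is nonempty for any $\delta \in (0,1)$, provided $D-D_1$ is large enough that the sum at $j = D-D_1$ already exceeds $1-\delta$ — and this is precisely the condition $\log\delta/\log(1-\epsilon) \le D-D_1$ alluded to in the introduction.

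The main obstacle, and the only place that needs genuine care, is the handling of ties / atoms in the distribution of $\mathcal{G}(\bs{h}_k)$: the clean binomial identity uses $\textrm{Pr}(G_j \le q_{1-\epsilon}) = 1-\epsilon$ with strict-versus-weak inequalities lined up correctly, and when the distribution has an atom at $q_{1-\epsilon}$ one must argue with $\{G_j \le q_{1-\epsilon}\}$ rather than $\{G_j < q_{1-\epsilon}\}$ and check that the direction of every inequality still favors the conclusion $\textrm{Pr}(s_k \ge q_{1-\epsilon}) \ge 1-\delta$. A secondary, more mundane point is making explicit the conditioning on $\mathcal{D}_k^1$ so that $\mathcal{G}$ is treated as a fixed function when invoking independence of the $G_j$; once that is said, the rest is the standard order-statistic computation and the substitution of the definition of $j^\star$.
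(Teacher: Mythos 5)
Your proposal is correct and uses essentially the same argument as the paper: the event $\{G_{(j^\star)} \geq q_{1-\epsilon}\}$ is rewritten as a statement about the binomial count of samples falling below the quantile, whose CDF at $j^\star-1$ is at least $1-\delta$ by the definition of $j^\star$. Your added remarks on conditioning on $\mathcal{D}_k^1$ and on ties/atoms only make explicit points the paper's terser proof leaves implicit (it assumes a continuous distribution), so no substantive difference remains.
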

\begin{proof}
    According to the definition of the quantile $q_{1-\epsilon}$, we have
    \begin{align}
        &\textrm{Pr}(G_{(j)}\geq q_{1-\epsilon}) \nonumber\\
        =&\textrm{Pr}(G_{(k)}<q_{1-\epsilon}, k=0,\cdots, j-1)\nonumber\\
        =&\sum_{k=0}^{j-1}\binom{D-D_1}{k}(1-\epsilon)^k\epsilon^{D-D_1-k}.
    \end{align}
    Therefore $G_{(j^\star)}$ is the smallest one among all upper bounds of the $(1-\epsilon)$-quantile of the underlying distribution with $1-\delta$ confidence. 
\end{proof}

Using the presented two procedures, we learn a high probability region $\mathcal{U}$ of the random channel coefficient vector $h_{k}$'s. The statistical guarantee of this statistical learning based robust optimization approximation approach is given by the following proposition:
\begin{proposition}
    Suppose the data samples in the data set $D_{k}$ are i.i.d. and chosen from a continuous distribution for any $k$. The data set is split into two independent parts $\mathcal{D}_{k}^1$ and $\mathcal{D}_{k}^2$. Each uncertainty set is chosen as $\mathcal{U}_{k}=\{\bs{h}_{k}:(\bs{h}_{k}-\hat{\bs{h}}_{k})^{T}\bs{\Sigma}_{k}^{-1}(\bs{h}_{k}-\hat{\bs{h}}_{k})\leq s_{k}\}$. Their parameters $\hat{\bs{h}}_{k},\bs{\Sigma}_{k},$ and $s_{k}$ are determined following equation (\ref{eq:HPR_mean}), equation (\ref{eq:HPR_covariance}), and equation (\ref{eq:HPR_size}), respectively. Thus, any feasible solution to problem $\mathscr{P}_{RO}$ guarantees that the probabilistic-QoS constraints (\ref{relaxed_Prob_QoS}) are satisfied with confidence at least $1-\delta$.
\end{proposition}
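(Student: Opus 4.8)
The plan is to prove the guarantee \emph{one user $k$ at a time} and to chain two implications: (i) the calibrated size $s_k$ makes the ellipsoid $\mathcal{U}_k$ a genuine $(1-\epsilon)$-content region for $\bs{h}_k$ with confidence at least $1-\delta$, which is Proposition~1 after a change of variables; and (ii) any $\bs{v}$ feasible for $\mathscr{P}_{\text{RO}}$ converts coverage of $\mathcal{U}_k$ into satisfaction of the safe condition. First I would fix $k$ and condition on the shape-learning subset $\mathcal{D}_k^1$. Conditionally on $\mathcal{D}_k^1$, the parameters $\hat{\bs{h}}_k$ of (\ref{eq:HPR_mean}) and $\bs{\Sigma}_k$ of (\ref{eq:HPR_covariance}) are deterministic, so the quadratic map $\mathcal{G}$ of (\ref{eq:map}) is a fixed measurable function; since $\mathcal{D}_k^1$ and $\mathcal{D}_k^2$ are independent and all samples are i.i.d., the transformed values $G_j:=\mathcal{G}(\tilde{\bs{h}}_k^{(D_1+j)})$, $j=1,\dots,D-D_1$, are, conditionally on $\mathcal{D}_k^1$, i.i.d.\ draws from the pushforward under $\mathcal{G}$ of the (continuous) law of $\bs{h}_k$.

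Next I would apply Proposition~1 conditionally. With $j^\star$ chosen as in (\ref{eq:HPR_size}) and $s_k=G_{(j^\star)}$, Proposition~1 gives $\textrm{Pr}(s_k\geq q_{1-\epsilon}\mid\mathcal{D}_k^1)\geq 1-\delta$, where $q_{1-\epsilon}$ denotes the $(1-\epsilon)$-quantile of $\mathcal{G}(\bs{h}_k)$; since this bound does not depend on $\mathcal{D}_k^1$, integrating over $\mathcal{D}_k^1$ yields $\textrm{Pr}(s_k\geq q_{1-\epsilon})\geq 1-\delta$. On the event $E_k:=\{s_k\geq q_{1-\epsilon}\}$, the definition $\mathcal{U}_k=\{\bs{h}_k:\mathcal{G}(\bs{h}_k)\leq s_k\}$ and monotonicity of probability give
\begin{equation}
\textrm{Pr}(\bs{h}_k\in\mathcal{U}_k\mid\mathcal{D}_k)=\textrm{Pr}(\mathcal{G}(\bs{h}_k)\leq s_k\mid\mathcal{D}_k)\geq\textrm{Pr}(\mathcal{G}(\bs{h}_k)\leq q_{1-\epsilon})=1-\epsilon,
\end{equation}
so (\ref{eq:prob_cover}) holds whenever $E_k$ occurs. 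Finally, feasibility of $\bs{v}$ for $\mathscr{P}_{\text{RO}}$ forces (\ref{constraint:RO_QoS}), i.e.\ $\{\bs{h}_k\in\mathcal{U}_k\}\subseteq\{\textrm{SINR}_k(\bs{v};\bs{h}_k)\geq\gamma_k\}$; hence on $E_k$ the left-hand side of (\ref{relaxed_Prob_QoS}) is at least $\textrm{Pr}(\bs{h}_k\in\mathcal{U}_k\mid\mathcal{D}_k)\geq 1-\epsilon$, and since $\textrm{Pr}(E_k)\geq 1-\delta$ this establishes (\ref{relaxed_Prob_QoS}) for user $k$ with confidence at least $1-\delta$.

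The step I expect to require the most care is verifying that Proposition~1 genuinely applies to the transformed sample $\{G_j\}$: its proof uses the exact identity $\textrm{Pr}(G_j<q_{1-\epsilon})=\textrm{Pr}(G_j\leq q_{1-\epsilon})=1-\epsilon$ together with the fact that ties among the $G_j$ occur with probability zero, and both rely on $\mathcal{G}(\bs{h}_k)$ being atomless \emph{after} the nonlinear change of variables. This is exactly where the continuity hypothesis on the law of $\bs{h}_k$ and the almost-sure nonsingularity of the sample covariance $\bs{\Sigma}_k$ are used: for a positive definite quadratic form every level set is the boundary of an ellipsoid and hence Lebesgue-null, so a density for $\bs{h}_k$ pushes forward to an atomless law for $\mathcal{G}(\bs{h}_k)$; one should also record the requirement that $D_1$ be large enough that $\bs{\Sigma}_k\succ 0$ with probability one. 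A secondary point to flag is that the conclusion above is per user $k$; to guarantee all $K$ relaxed constraints simultaneously with confidence $1-\delta$ one would apply a union bound over $E_1,\dots,E_K$ (equivalently, replace $\delta$ by $\delta/K$ in (\ref{eq:HPR_size})), since these events are generally dependent through the shared channel samples.
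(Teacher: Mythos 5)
Your argument follows essentially the same route as the paper's own (much terser) proof: condition on $\mathcal{D}_k^1$ so that $\mathcal{G}$ is a fixed map, invoke Proposition~1 to get $\textrm{Pr}(s_k\geq q_{1-\epsilon})\geq 1-\delta$, and then use feasibility of $\bs{v}$ in $\mathscr{P}_{\text{RO}}$ to convert coverage of $\mathcal{U}_k$ into the probabilistic-QoS guarantee. Your additional remarks on atomlessness of $\mathcal{G}(\bs{h}_k)$, nonsingularity of $\bs{\Sigma}_k$, and the per-user versus simultaneous (union-bound) reading are careful elaborations of points the paper leaves implicit, not deviations from its approach.
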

\begin{proof}
  Since $\mathcal{G}$ depends only on $\mathcal{D}_{k}^{1}$, we have 
  \begin{align}
    &\textrm{Pr}_{\mathcal{D}_{k}^{2}}(\bs{v}\in\mathcal{V}) = \textrm{Pr}_{\mathcal{D}_{k}^{2}}(G_{(t^\star)}\geq q_{1-\epsilon})  \geq 1-\delta.
  \end{align}
  Therefore, it is readily obtained that $\textrm{Pr}(\textrm{SINR}_k\geq \gamma_k)\geq 1-\epsilon$ satisfies with confidence at least $1-\delta$.
\end{proof}

Note that $j^\star$ exists only if 
\begin{equation}
\sum_{k=0}^{D-D_1-1}\binom{D-D_1}{k}(1-\epsilon)^k\epsilon^{D-D_1-k}\geq 1-\delta,  
\end{equation}
which implies that $1-(1-\epsilon)^{D-D_1}\geq 1-\delta$. In other words, the required minimum number of samples is $D>D-D_1\geq \log{\delta}/\log{(1-\epsilon)}$ to achieve the $1-\delta$ confidence of the probabilistic QoS constraint (\ref{constraint:probQoS}). Matrix $\bs{B}_{k}$ can be computed as 
\begin{equation}\label{eq:HPR_B}
    \bs{B}_{k}=\sqrt{s_{k}}\bs{\Delta}_{k},
\end{equation} 
where $\bs{\Delta}_k$ is the Cholesky decomposition of $\bs{\Sigma}_{k}$, i.e., $\bs{\Sigma}_{k}=\bs{\Delta}_{k}\bs{\Delta}_{k}^{\sf{H}}$. We summarize the whole procedure for learning the high probability region $\mathcal{U}$ from data set $\mathcal{D}$ in Algorithm \ref{algorithm:robust}.

\SetNlSty{textbf}{}{:}
\IncMargin{1em}
\begin{algorithm}[h]
\textbf{Input:} the data set $\mathcal{D}=\{\tilde{\bs{h}}^{(1)},\cdots,\tilde{\bs{h}}^{(D)}\}$.\\
\For{each $k=1,\cdots K$}{
\textbf{Data splitting:} Randomly split the samples of $\bs{h}_{k}$, namely $\mathcal{D}_{k}$, into two parts $\mathcal{D}_{k}^1$ and $\mathcal{D}_{k}^2$.\\[2mm]
\textbf{Shape learning:} Set the shape parameters $ \hat{\bs{h}}_{k}$ and $\bs{\Sigma}_{k}$ as equation (\ref{eq:HPR_mean}) and equation (\ref{eq:HPR_covariance}) based on $\mathcal{D}_{k}^1$. \\[2mm]
\textbf{Size calibration:} Set the size parameter $s_{k}$ as $G_{(j^\star)}$ by computing the values of function $\mathcal{G}$ on $\mathcal{D}_{k}^2$, where $\mathcal{G}$ is defined in equation (\ref{eq:map}) and $j^\star$ is chosen as equation (\ref{eq:HPR_size}).\\[2mm] 
Compute $\bs{B}_{k}=\sqrt{s_{k}}\bs{\Delta}_{k}$ through Cholesky decomposition $\bs{\Sigma}_{k}=\bs{\Delta}_{k}\bs{\Delta}_{k}^{\sf{H}}$.}
 \textbf{Output:} $\hat{\bs{h}}_{k}$, $\bs{B}_{k}$ for all $k$.
 \caption{Statistical Learning Based Approach for the High Probability Region $\mathcal{U}_k$}
 \label{algorithm:robust}
\end{algorithm}

\subsection{Tractable Reformulations for Robust Optimization Problem}\label{subsec:tractable_reformulation}
According to the ellipsoidal uncertainty model (\ref{eq:uncertainty_set1}), the robust optimization approximation (\ref{constraint:RO_QoS}) can be rewritten as
\begin{align}
&\bs{h}_k^{\sf{H}}\bigg(\frac{1}{\gamma_k}\bs{v}_k\bs{v}_k^{\sf{H}}-\sum_{l\ne k} \bs{v}_l\bs{v}_l^{\sf{H}}\bigg)\bs{h}_k\geq \sigma_k^2  \label{eq:QoS1} \\
    &\bs{h}_{k} = \hat{\bs{h}}_{k}+ \bs{B}_{k}\bs{u}_{k}, \bs{u}_{k}^{\sf{H}}\bs{u}_{k} \leq 1, \label{eq:uncertainty1}
\end{align}
where $\bs{u}_{nk}\in\mathbb{C}^{L}$. By defining matrices 
\begin{equation}\label{eq:HPR_H}
    \bs{H}_{k}=\begin{bmatrix}
    \hat{\bs{h}}_{k} & \bs{B}_{k}
  \end{bmatrix}\in\mathbb{C}^{NL\times (NL+1)}
\end{equation}
and using the S-procedure \cite{boyd2004convex}, we obtain the following equivalent tractable reformulation for (\ref{eq:QoS1}) and (\ref{eq:uncertainty1}):
\begin{align}
  &\bs{H}_{k}^{\sf{H}}\bigg(\frac{1}{\gamma_k}\bs{v}_{k}\bs{v}_{k}^{\sf{H}}-\sum_{l\ne k}\bs{v}_{l}\bs{v}_{l}^{\sf{H}}\bigg)\bs{H}_{k}\succeq \bs{Q}_k \label{constraint:nonconvex_sdp} \\
  &\lambda_k\geq 0, \label{constraint:non_negative}
\end{align}
where $\bs{\lambda}=[\lambda_{k}]\in\mathbb{R}_{+}^{K}$ and $\bs{Q}_k$ is given by
\begin{equation}
  \bs{Q}_k=\begin{bmatrix}
    \lambda_{k}+\sigma_k^2 & \bs{0} \\
    \bs{0} & -\lambda_{k}\bs{I}_{NL}
  \end{bmatrix}\in\mathbb{C}^{(NL+1)\times(NL+1)}.
\end{equation}
The derivation details of (\ref{constraint:nonconvex_sdp}) and (\ref{constraint:non_negative}) from (\ref{eq:QoS1}) and (\ref{eq:uncertainty1}) is relegated to Appendix \ref{append:Sprocedure}.

Thus the proposed robust optimization approximation for problem $\mathscr{P}_{\text{CCP}}$ is given by the following group sparse beamforming problem with nonconvex quadratic constraints:
\begin{eqnarray}
    \mathscr{P}_{\text{RGS}}:\!\!\!\!\!\mathop{\textrm{minimize}}_{\bs{v}\in\mathbb{C}^{NKL},\bs{\lambda}\in\mathbb{R}^{K}}\!\!\! && \sum_{n,l}\frac{1}{\eta_n}\|\bs{v}_{nl}\|_2^2+ \sum_{n,l}P_{nl}^{\text{c}}I_{(n,l)\in\mathcal{T}(\bs{v})} \nonumber \\
    \textrm{subject to~~~} && (\ref{constraint:nonconvex_sdp}),\lambda_k\geq0, \forall k\in[K]   \label{con:nonconvex_qudratic} \\
    && \sum_{l=1}^{K}\|\bs{v}_{nl}\|_2^2\leq P_n^{\text{Tx}}, \forall n\in[N]. \label{con:transmit_power}
\end{eqnarray}
Its computational complexity of solving problem $\mathscr{P}_{\text{RGS}}$ is independent of the sample size $D$. An effective approach for obtaining approximate solution of nonconvex quadratic constrained quadratic program is to lift the aggregative beamforming vector as a rank-one positive semidefinite matrix $\bs{V}=\bs{v}\bs{v}^{\sf{H}}$ and simply drop the rank-one constraint, which is termed as the semidefinite relaxation (SDR) technique \cite{luo2007approximation}. The obtained solution however may be infeasible for the original nonconvex quadratic constraints. To induce the group sparsity with nonconvex quadratic constraints, a quadratic variational form of weighted mixed $\ell_1/\ell_2$-norm is adopted in \cite{shi2015robust_TSP}. In this paper, we will adopt an iteratively reweighted minimization approach which has demonstrated its effectiveness in cloud radio access network \cite{dai2016energy,shi2016smoothed} to further enhance the group sparsity of the aggregative beamforming vector. In addition, to improve the feasibility for the nonconvex quadratic constraint for each subproblem of the reweighted approach, we shall provide a novel difference-of-convex-functions (DC) approach for inducing rank-one solution. It should be mentioned that the uplink-downlink duality is not applicable to efficiently address the robust QoS constraints (\ref{constraint:nonconvex_sdp}) due to the CSI uncertainty.

\subsection{Integrating the Robust Optimization Approximation with a Cost-Effective Sampling Strategy}\label{subsec:sampling_strategy}

Consider the block fading channel where the channel distribution is assumed invariant \cite{liu2019two} within the \textit{coherence interval for channel statistics}. The coherence interval for channel statistics consists of $T_s$ blocks, where each block is called a \textit{coherence interval for CSI} and the channel coefficient vector remains unchanged within each block. However, collecting $D$ channel samples within each block leads to high signaling overhead. To address this issue, we provide a cost-effective sampling strategy for enabling robust transmission, whose timeline is illustrated in Fig. \ref{fig:time_scale}. 

At the beginning of the coherence interval for channel statistics, we collect $D$ i.i.d. channel samples as $\mathcal{D}$. Based on the data set $\mathcal{D}$, we can learn the estimated channel coefficient vector $\hat{\bs{h}}_k$ from equation (\ref{eq:HPR_mean}) and the estimated high probability region of the error $\bs{e}_k$ as $\bs{B}_{k}$ from equation (\ref{eq:HPR_B}). For the transmission in the first block, we can obtain $\{\bs{H}_k\}$ by combining these two parts following equation (\ref{eq:HPR_H}) and solve the resulting problem $\mathscr{P}_{\text{RGS}}$. For any other block $t>1$, we can obtain the estimated channel coefficient $\hat{\bs{h}}[t]$ as the sample mean by collecting as few as one sample of the channel coefficient vector. By replacing the estimated channel coefficient $\hat{\bs{h}}$ and keeping the error information $\{\bs{B}_{k}:k\in[K]\}$, we can construct the parameter $\{\bs{H}_k[t]\}$ at the $t$-th block as
\begin{equation}\label{eq:efficient_sampling}
    \bs{H}_{k}[t]=\begin{bmatrix}
    \hat{\bs{h}}_{k}[t] & \bs{B}_{k}
  \end{bmatrix},\forall k\in[K],
\end{equation}
and design the transmitter beamformer by solving problem $\mathscr{P}_{\text{RGS}}(\{\bs{H}_{k}[t]\})$, which significantly reduces the signaling overhead for channel sampling. The effectiveness of this cost-effective scheme will be demonstrated in Section \ref{subsec:sim_uncertainty} numerically.

\begin{figure}[h]
  \centering
  \includegraphics[width=\columnwidth]{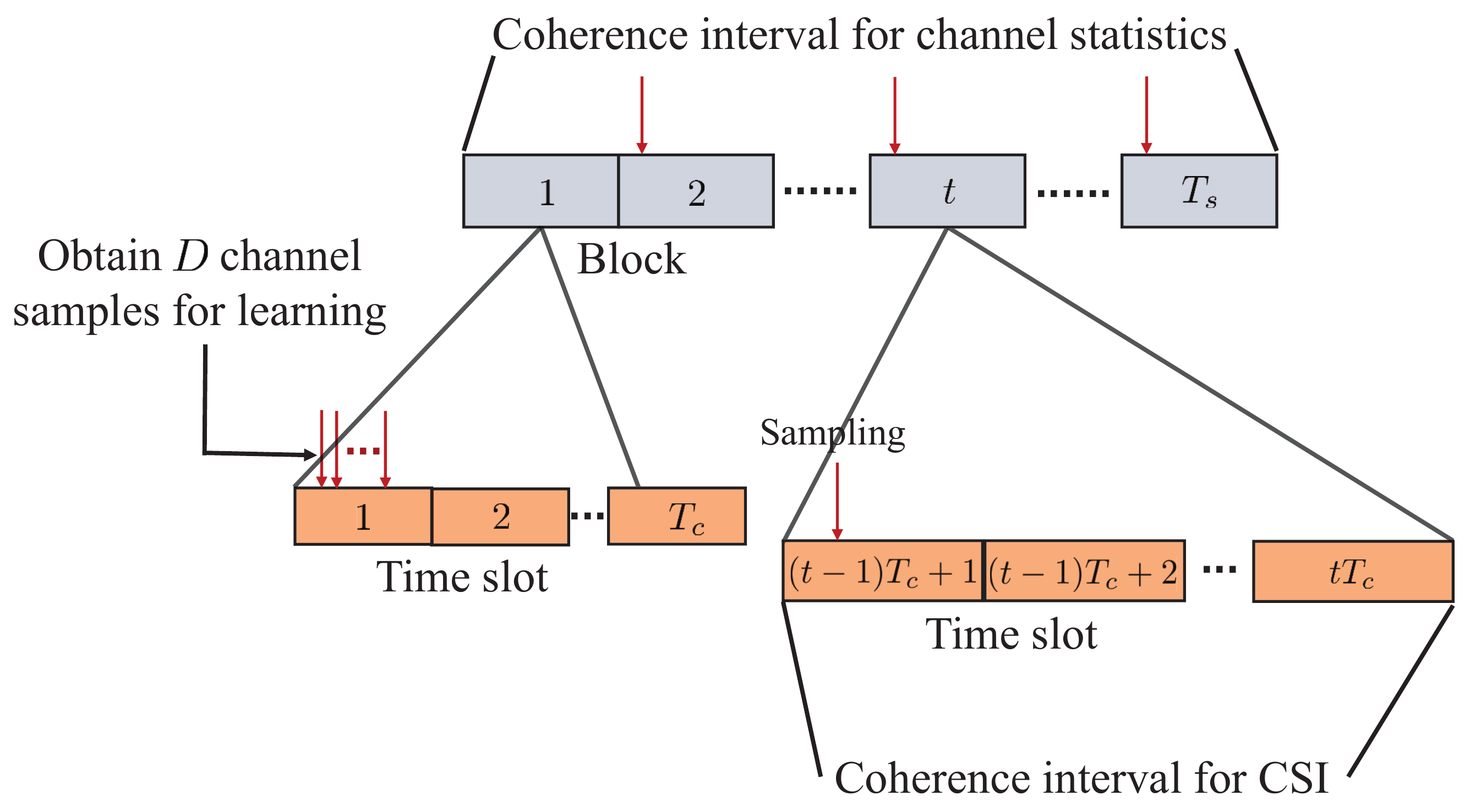}
  \caption{Timeline of a cost-effective channel sampling strategy.}
  \label{fig:time_scale}
\end{figure}

\section{Reweighted Power Minimization for Group Sparse Beamforming with Nonconvex Quadratic Constraints}
This section presents a reweighted power minimization approach to induce the group sparsity structure for problem $\mathscr{P}_{\text{RGS}}$. We further demonstrate that the nonconvex quadratic constraints can be reformulated as convex constraints with respect to a rank-one positive semidefinite matrix using a matrix lifting technique, followed by proposing a DC approach to induce rank-one solutions.

\subsection{Matrix Lifting for Nonconvex Quadratic Constraints}
We observe that constraints (\ref{constraint:nonconvex_sdp}) are convex with respect to $\bs{v}\bs{v}^{\sf{H}}$ despite of its nonconvexity with respect to $\bs{v}$. This motivates us to adopt the matrix lifting technique \cite{luo2007approximation} to address the nonconvex quadratic constraints in problem $\mathscr{P}_{\text{RGS}}$ by denoting 
\begin{align}
  &\bs{V}_{ij}[s,t]=\bs{v}_{si}\bs{v}_{tj}^{\sf{H}}\in\mathbb{C}^{L\times L}\\
  &\bs{V}_{ij}=\begin{bmatrix}
    \bs{V}_{ij}[1,1] & \cdots & \bs{V}_{ij}[1,N] \\
    \vdots & \ddots & \vdots \\
    \bs{V}_{ij}[N,1] & \cdots & \bs{V}_{ij}[N,N]
  \end{bmatrix}=\bs{v}_{i}\bs{v}_{j}^{\sf{H}}\in\mathbb{C}^{NL\times NL}\\
  &\bs{V}=\bs{v}\bs{v}^{\sf{H}}=\begin{bmatrix}
    \bs{V}_{11} & \cdots & \bs{V}_{1K} \\
    \vdots & \ddots & \vdots \\
    \bs{V}_{K1} & \cdots & \bs{V}_{KK}
  \end{bmatrix}\in\mathbb{S}_+^{NKL},
\end{align}
where $\mathbb{S}_+^{NKL}$ denotes the set of Hermitian positive semidefinite (PSD) matrices. The aggregative beamforming vector $\bs{v}$ is thus lifted as a rank-one PSD matrix $\bs{V}$. The constraint $\mathcal{C}_k$ of problem $\mathscr{P}_{\text{RGS}}$, which given by (\ref{constraint:nonconvex_sdp}), can be equivalently rewritten as the following PSD constraint  
\begin{equation}\label{constraint:lifted_sdp}
  \bs{H}_{k}^{\sf{H}}\bigg(\frac{1}{\gamma_k}\bs{V}_{kk}-\sum_{l\ne k}\bs{V}_{ll}\bigg)\bs{H}_{k}\succeq \bs{Q}_k, 
\end{equation}
and the transmit power constraint (\ref{con:transmit_power}) can be equivalently rewritten as
\begin{equation}
  \sum_{l=1}^{K}\|\bs{v}_{nl}\|_2^2=\sum_{l=1}^{K}\textrm{Tr}(\bs{V}_{ll}[n,n])\leq P_n^{\text{Tx}}, \forall n=1,\cdots,N. 
\end{equation}
Therefore, using the matrix lifting technique, we obtain an equivalent reformulation for problem $\mathscr{P}_{\text{RGS}}$ as
\setlength\arraycolsep{2pt}
\begin{eqnarray}
    \mathscr{P}:\mathop{\textrm{minimize}}_{\bs{V},\bs{\lambda}} && \sum_{n,l}\left(\frac{1}{\eta_n}\textrm{Tr}(\bs{V}_{ll}[n,n])+ P_{nl}^{\text{c}}I_{\textrm{Tr}(\bs{V}_{ll}[n,n])\ne 0}\right) \nonumber \\
    \textrm{subject to} && (\ref{constraint:lifted_sdp}),\lambda_k\geq 0, \forall k\in[K]   \label{con:QoS_sdp}\\
    && \sum_{l=1}^{K}\textrm{Tr}(\bs{V}_{ll}[n,n])\leq P_n^{\text{Tx}}, \forall n\in[N] \label{con:transmit_power_sdp}\\
    && \bs{V}\succeq\bs{0}, \textrm{rank}(\bs{V})=1. \label{prob:psd}
\end{eqnarray}
Note that the constraints are still nonconvex due to the nonconvexity of the rank-one constraint. 

\subsection{DC Representations for Rank-One Constraint}
For a positive semidefinite matrix $\bs{V}\in\mathbb{S}_+^{NKL}$, its rank is one if and only if it has only one non-zero singular value, i.e.,
\begin{equation}
  \sigma_i(\bs{V})=0, i=2,\cdots,NKL,
\end{equation}
where $\sigma_i(\bs{V})$ is the $i$-th largest singular value of $\bs{V}$. The trace norm and spectral norm of the positive semidefinite matrix $\bs{V}$ are respectively given as
\begin{equation}
   \textrm{Tr}(\bs{V})=\sum_{i=1}^{NKL}\sigma_i(\bs{V}), \|\bs{V}\| = \sigma_1(\bs{V}).
 \end{equation} 
Thus we obtain an equivalent DC representation for the rank-one constraint of $\bs{V}$:
\begin{equation}\label{eq:DCrepresentation}
  \mathcal{R}(\bs{V})=\textrm{Tr}(\bs{V})-\|\bs{V}\|=0.
\end{equation}
$\mathcal{R}$ is a DC function of $\bs{V}$ since both the trace norm and the spectral norm are convex.

\subsection{Reweighted $\ell_1$ Minimization for Inducing Group Sparsity}
Reweighted $\ell_1$ minimization approach has shown its advantages in enhancing group sparsity for improving the energy-efficiency of cloud radio access networks \cite{dai2016energy,shi2016smoothed}. $\ell_1$-norm is a well recognized convex surrogate for the $\ell_0$-norm. In order to further enhance the sparsity, reweighted $\ell_1$ minimization is proposed to iteratively minimize a weighted $\ell_1$-norm and update the weights. For the objective function of problem $\mathscr{P}$, we observe that the indicator function $I_{\textrm{Tr}(\bs{V}_{ll}[n,n])\ne 0}$ can be interpreted as the $\ell_0$-norm of $\textrm{Tr}(\bs{V}_{ll}[n,n])$. We can thus use the reweighed $\ell_1$ minimization technique via approximating $I_{\textrm{Tr}(\bs{V}_{ll}[n,n])\ne 0}$ by $w_{nl}\textrm{Tr}(\bs{V}_{ll}[n,n])$, which consists of alternatively minimizing the approximated objective function and updating the weight as
\begin{equation}\label{eq:update_weights}
  w_{nl}=\frac{c}{\textrm{Tr}(\bs{V}_{ll}[n,n])+\tau},
\end{equation}
where $\tau>0$ is a constant regularization factor and $c>0$ is a constant.
If $\textrm{Tr}(\bs{V}_{ll}[n,n])$ is small, the reweighted $\ell_1$ minimization approach will put larger weight on the transceiver pair $(n,l)$, which prompts that the inference task $l$ is not preferred to be executed at the $n$-th edge node.

\subsection{Proposed Reweighted Power Minimization Approach}
In this subsection, we provide a reweighted power minimization approach by combining the matrix lifting, DC representation and reweighted $\ell_1$ minimization techniques. In the $j$-th step, we shall update  $\bs{V}^{[j+1]}$ via solving
\begin{eqnarray}
    \mathop{\textrm{minimize}}_{\bs{V},\bs{\lambda}} && \sum_{n,l}\Big(\frac{1}{\eta_n}+ w_{nl}^{[j]}P_{nl}^{\text{c}}\Big)\textrm{Tr}(\bs{V}_{ll}[n,n]) \nonumber \\
    \textrm{subject to} && (\ref{constraint:lifted_sdp}),\lambda_k\geq 0, \forall k\in[K]   \nonumber \\
    && \sum_{l=1}^{K}\textrm{Tr}(\bs{V}_{ll}[n,n])\leq P_n^{\text{Tx}}, \forall n\in[N] \nonumber\\
    && \bs{V}\succeq\bs{0}, \textrm{rank}(\bs{V})=1,\label{reweighted_rank_one}
\end{eqnarray}
and the weights $\{w_{nl}^{[j]}\}$ are updated following (\ref{eq:update_weights}) which are initialized as $1$ at the beginning.

To solve problem (\ref{reweighted_rank_one}) with nonconvex rank-one constraint, we propose to use the DC representation (\ref{eq:DCrepresentation}) by solving the following DC program
\begin{eqnarray}
    \mathscr{P}_{\text{DC}}:\mathop{\textrm{minimize}}_{\bs{V},\bs{\lambda}} && \sum_{n,l}\Big(\frac{1}{\eta_n}+ w_{nl}^{[j]}P_{nl}^{\text{c}}\Big)\textrm{Tr}(\bs{V}_{ll}[n,n])+\mu\mathcal{R}(\bs{V}) \nonumber \\
    \textrm{subject to} && (\ref{constraint:lifted_sdp}),\lambda_k\geq 0, \forall k\in[K]    \nonumber\\
    && \sum_{l=1}^{K}\textrm{Tr}(\bs{V}_{ll}[n,n])\leq P_n^{\text{Tx}}, \forall n\in[N] \nonumber\\
    && \bs{V}\succeq\bs{0},
\end{eqnarray}
where $\mu>0$ is the regularization parameter. Despite of the nonconvexity of the DC problem, problem $\mathscr{P}_{\text{DC}}$ can be efficiently solved by the simplified DC algorithm, i.e., iteratively linearizing the concave part \cite{tao1997convex}. At the $t$-th iteration, we shall solve 
\begin{eqnarray}
    \mathop{\textrm{minimize}}_{\bs{V},\bs{\lambda}} && \sum_{n,l}\Big(\frac{1}{\eta_n}+ w_{nl}^{[j]}P_{nl}^{\text{c}}\Big)\textrm{Tr}(\bs{V}_{ll}[n,n]) \nonumber \\
    &&\quad\quad+\mu(\textrm{Tr}(\bs{V})- \textrm{Tr}(G^{(t)}\bs{V}))\nonumber \\
    \textrm{subject to} && (\ref{constraint:lifted_sdp}),\lambda_k\geq 0, \forall k\in[K]   \nonumber\\
    && \sum_{l=1}^{K}\textrm{Tr}(\bs{V}_{ll}[n,n])\leq P_n^{\text{Tx}}, \forall n\in[N] \nonumber\\
    && \bs{V}\succeq\bs{0}, \label{prob:DCiter}
\end{eqnarray}
where $G^{(t)}$ is one subgradient of spectral norm at $\bs{V}^{(t)}$. It can be computed as $\partial\|\bs{V}\|_2=\bs{u}_1\bs{u}_1^{\sf{H}}$ where $\bs{u}_1$ is the eigenvector corresponding to the largest eigenvalue of matrix $\bs{V}$. This DC algorithm guarantees converging to a stationary point of problem $\mathscr{P}_{\text{DC}}$ from arbitrary initial points \cite{tao1997convex}.

When the reweighted $\ell_1$ minimization algorithm converges at a rank-one solution $\bs{V}^{[j]}$, we can extract the aggregative beamforming vector $\bs{v}^\star$ from the Choleskey decomposition $\bs{V}^{[j]}=\bs{v}^\star{\bs{v}^\star}^{\sf{H}}$. The whole procedure of the proposed reweighted power minimization approach is summarized in Algorithm \ref{algorithm:proposed}.

\SetNlSty{textbf}{}{:}
\IncMargin{1em}
\begin{algorithm}[h]
\textbf{Initialization:} $\bs{V}^{[0]},w_{nl}$.\\
 \While{not converge}{
 $\bs{V}^{(0)} \leftarrow \bs{V}^{[j]}$ \\
 \While{not converge}{
    update $\bs{V}^{(t)}$ as the solution to problem (\ref{prob:DCiter})
 }
 $\bs{V}^{[j+1]} \leftarrow \bs{V}^{(t)}$ \\
 update the weights $\{w_{nl}^{[j+1]}\}$ according to equation (\ref{eq:update_weights})
 }
 obtain $\bs{v}^\star$ through Choleskey decomposition $\bs{V}^{[j]}=\bs{v}^\star{\bs{v}^\star}^{\sf{H}}$. \\
 \textbf{Output:} $\bs{v}^\star$.
 \caption{Proposed Reweighted Power Minimization Approach for Problem $\mathscr{P}$}
 \label{algorithm:proposed}
\end{algorithm}


\section{Numerical Results}\label{sec:simulation}
In this section, we provide numerical experiments for comparing the proposed framework with other state-of-the-art approaches. We generate the edge inference system with $N=4$ APs located at $(\pm400,\pm400)$ meters and $K=4$ mobile users randomly located in the $[-800~800]\times [-800~800]$ meters square region. Each AP is equipped with $L=2$ antennas. The imperfection model of the channel coefficient vector between the $n$-th AP and the $k$-th mobile user is chosen as $\bs{h}_{kn}=10^{-L(d_{kn})/20}(\bs{c}_{kn}+\bs{e}_{kn})$. The path loss model is given by $L(d_{kn})=128.1+37.6\log_{10}{d_{kn}}$, the Rayleigh small scale fading coefficient is given by $\bs{c}_{kn}\sim\mathcal{CN}(\bs{0},\bs{I})$, and the additive error is given by $\bs{e}_{kn}\sim\mathcal{CN}(\bs{0},10^{-4}\bs{I})$. 
As presented in Section \ref{subsec:learninguncertainty}, $D_1$ determines the accuracy of the learned shape of the uncertainty set, while $D_2$ determines the accuracy of the calibrated size of the uncertainty set. To balance these two points, the collected $D$ independent samples of $\bs{h}_{kn}$'s are split evenly for learning the shape and size of the uncertainty ellipsoids, respectively, i.e., $D_1=D_2=D/2$. For each AP, the power amplifier efficiency is chosen as $\eta_1=\cdots=\eta_N=1/4$, the average maximum transmit power is chosen as $P_1^{\text{Tx}}=\cdots=P_N^{\text{Tx}}=1W$, and the computation power consumption for each task $\phi_k$ at the $n$-th AP is chosen as $P_{nk}^{\text{c}}=0.60W$. We set the target SINR as $\gamma_1=\cdots=\gamma_K=\gamma$, the tolerance level as $\epsilon=0.05$, and the confidence level as $\delta=0.05$. The regularization parameters $\tau$ is set as $10^{-6}$ and $\mu$ is set as $10$. 

\subsection{Benefits of Taking CSI Uncertainty into Consideration}\label{subsec:sim_uncertainty}
In this paper, we consider the CSI uncertainty in channel sampling and propose to solve it with a learning-based robust optimization approximation approach. To further reduce the channel sampling overhead, we provide a cost-effective sampling strategy in Secion \ref{subsec:sampling_strategy}. We now evaluate its advantages over the beamformer design without taking the CSI error into consideration by supposing that each task is performed at all APs. Specifically, we collect $D=200$ i.i.d. channel samples in the training phase within one coherent interval for CSI. In the test phase, we only collect one channel sample $\bs{h}^{(1)}$, construct $\bs{H}_k$'s following equation (\ref{eq:efficient_sampling}) and solve the problem
 \begin{eqnarray}
    \mathop{\textrm{minimize}}_{\bs{V},\bs{\lambda}} && \sum_{n,l}\left(\frac{1}{\eta_n}\textrm{Tr}(\bs{V}_{ll}[n,n])+ P_{nl}^{\text{c}}\right) \nonumber \\
    \textrm{subject to} && (\ref{constraint:lifted_sdp}),\lambda_k\geq 0, \forall k\in[K]   \nonumber\\
    && \sum_{l=1}^{K}\textrm{Tr}(\bs{V}_{ll}[n,n])\leq P_n^{\text{Tx}}, \forall n\in[N] \nonumber\\
    && \bs{V}\succeq\bs{0}.
\end{eqnarray}
As comparison, the beamforming design without taking uncertainty into consideration is given by solving the problem
\begin{eqnarray}
    \mathop{\textrm{minimize}}_{\bs{V},\bs{\lambda}} && \sum_{n,l}\left(\frac{1}{\eta_n}\textrm{Tr}(\bs{V}_{ll}[n,n])+ P_{nl}^{\text{c}}\right) \nonumber \\
    \textrm{subject to} && {\bs{h}_{k}^{(1)}}^{\sf{H}}\Big(\frac{1}{\gamma_k}\bs{V}_{kk}-\sum_{l\ne k}\bs{V}_{ll}\Big){\bs{h}_{k}^{(1)}}\geq \sigma_k^2, ~\forall k  \nonumber\\
    && \sum_{l=1}^{K}\textrm{Tr}(\bs{V}_{ll}[n,n])\leq P_n^{\text{Tx}}, ~\forall n, \nonumber\\
    && \bs{V}\succeq\bs{0}.
\end{eqnarray}
Note that we use SDR for both approaches for fairness. We compare two approaches by generating $40000$ realizations of i.i.d. channel samples for testing, and regenerate the training data set for the proposed approach every $200$ realizations. We compute the achieved SINR for each mobile device with the solution to each approach, i.e., $\textrm{SINR}_k(\bs{v};\tilde{\bs{h}})$ where $\tilde{\bs{h}}$ is the true channel coefficient vector, and calculate the number of realizations that the target QoS for each device is met, i.e., $\textrm{SINR}_k\geq \gamma_k$. The results shown in Table \ref{tab:uncertainty} demonstrate that the proposed robust approximation approach has considerably improved the robustness of QoS against CSI errors by a cost-effective sampling approach.

\begin{table}[h]
\centering
\caption{Number of tests that QoS is met}
        \label{tab:uncertainty}
        \begin{tabular}{c|c c c c}
        User Index  & 1& 2& 3& 4 \\\hline
        Proposed Approach  & 39946 & 39946 & 39946 & 39946\\
        Without considering uncertainty  & 15205 & 15123 & 15197 & 15214
        \end{tabular}
\end{table}

\subsection{Overcoming the Over-Conservativeness of Scenario Generation}\label{subsec:SG}
As we point out in Section \ref{subsec:problem_analysis}, the scenario generation approach is over-conservative since it imposes that the target QoS constraints are satisfied for all samples, which would lead to a smaller feasible region. Here we use numerical experiments to demonstrate the advantage of the presented robust optimization approximation approach in overcoming the over-conservativeness. Consider the feasibility problem of the robust optimization approximation approach given by
\begin{eqnarray}
    \mathop{\textrm{find}} &&  \bs{V},\bs{\lambda} \nonumber \\
    \textrm{subject to} && (\ref{constraint:lifted_sdp}),\lambda_k\geq 0, \forall k\in[K],   \nonumber\\
    && \sum_{l=1}^{K}\textrm{Tr}(\bs{V}_{ll}[n,n])\leq P_n^{\text{Tx}}, \forall n\in[N], \nonumber\\
    && \bs{V}\succeq\bs{0},
\end{eqnarray}
and the feasibility problem of the scenario approach given by
\begin{eqnarray}
    \mathop{\textrm{find}} &&  \bs{V} \nonumber \\
    \textrm{subject to} && {\bs{h}_{k}^{(i)}}^{\sf{H}}\Big(\frac{1}{\gamma_k}\bs{V}_{kk}-\sum_{l\ne k}\bs{V}_{ll}\Big){\bs{h}_{k}^{(i)}}\geq \sigma_k^2, ~\forall k, i  \nonumber\\
    && \sum_{l=1}^{K}\textrm{Tr}(\bs{V}_{ll}[n,n])\leq P_n^{\text{Tx}}, ~\forall n, \nonumber\\
    && \bs{V}\succeq\bs{0}.
\end{eqnarray}
Note that we adopt the SDR technique in both approach for purpose of fairness. We collect $D=200$ i.i.d. channel samples for each realization, run both algorithms for $25$ random realizations of the data set, and compare the probability of yielding feasible solutions using the scenario generation approach and the presented robust optimization approximation approach. The results in Fig. \ref{fig:feasibility} reveal that the statistical learning based robust approximation considerably improves the probability of feasibility compared with the scenario generation approach though we only obtain sufficient conditions for the robust optimization counterpart using S-procedure in Section \ref{subsec:tractable_reformulation}.

\begin{figure}[h]
        \centering
        \includegraphics[width=\columnwidth]{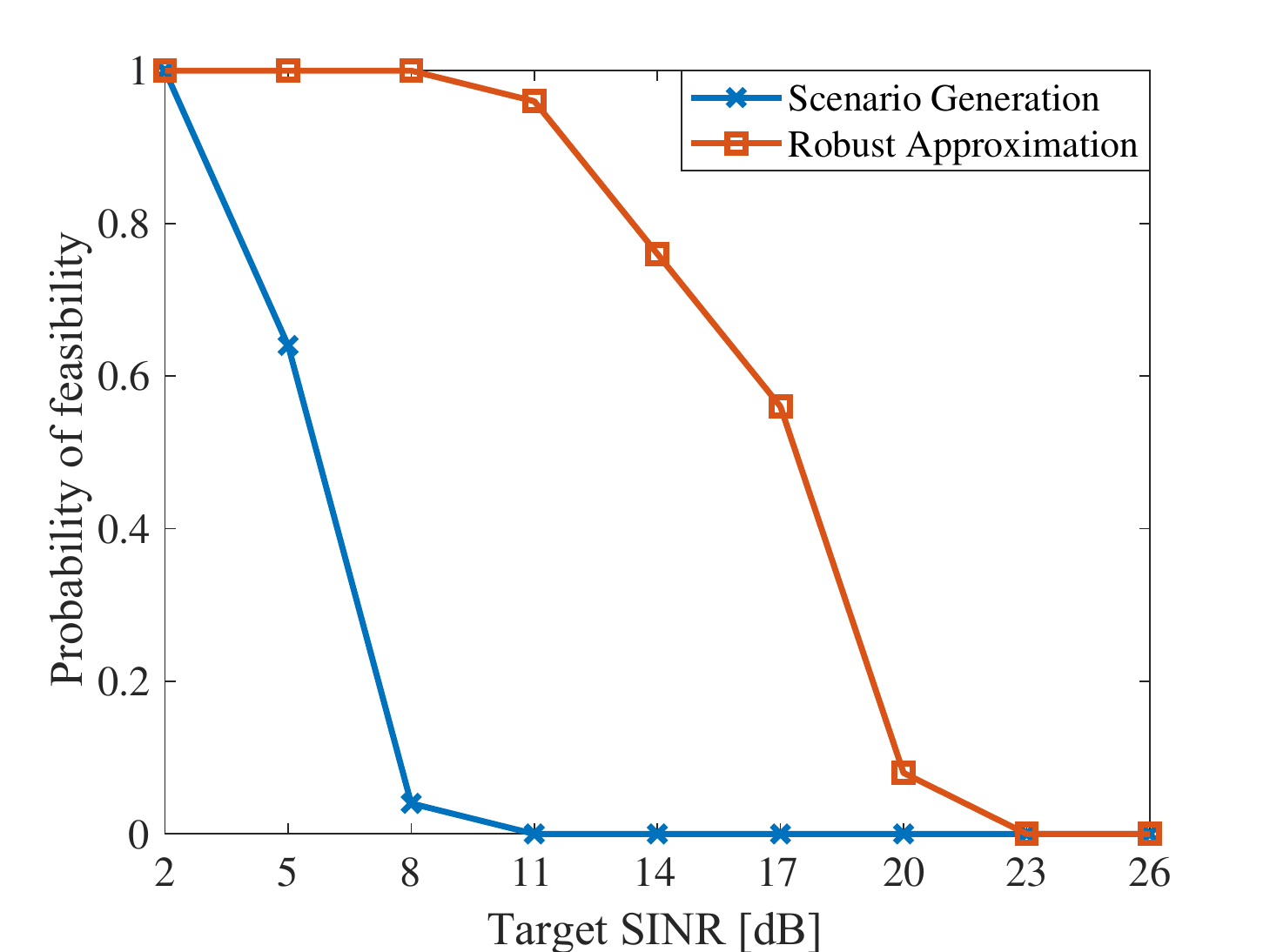}
        \caption{Probability of feasibility using scenario generation and the robust optimization approximation approach over the target SINR $\gamma$.}\label{fig:feasibility}
\end{figure}

\subsection{Convergence Behavior}\label{subsec:convergence}
By choosing the reweighting parameter as $c=1/\ln(1+\tau^{-1})$, the proposed reweighted power minimization approach, i.e., Algorithm \ref{algorithm:proposed}, essentially approximates the $\ell_0$-norm according to
$I_{x\ne 0} = \|x\|_0 = \lim_{\tau \rightarrow 0}{\ln(1+x\tau^{-1})}/{\ln(1+\tau^{-1})}$,
and minimizes the approximated objective function
\begin{align}
  f(\bs{V})&=\sum_{n,l}\Big(\frac{1}{\eta_n}\textrm{Tr}(\bs{V}_{ll}[n,n])\nonumber\\
  &\quad+ P_{nl}^{\text{c}}\frac{\ln(1+\tau^{-1}\textrm{Tr}(\bs{V}_{ll}[n,n]))}{\ln(1+\tau^{-1})}\Big)+\mu\mathcal{R}(\bs{V})\label{eq:obj_value}
\end{align}
under constraints (\ref{con:QoS_sdp}) and (\ref{con:transmit_power_sdp}) using an majorization-minimization (MM) technique as stated in \cite{dai2016energy}. Fig. \ref{fig:convergence_obj} illustrates the convergence behavior of the proposed reweighted power minimization approach in terms of the objective function $f$ by collecting $D=200$ channel samples. We also plot the corresponding trajectories of the group sparsity of the aggregative beamforming vector $\bs{v}$ in Fig. \ref{fig:convergence_num}, i.e., total number of inference tasks performed at all edge computing nodes. We observe that the number of tasks to be performed at edge computing nodes increases with a greater value of target QoS $\gamma$, which leads to higher total power consumption of the edge inference system.
\begin{figure}[h]
        \centering
        \includegraphics[width=\columnwidth]{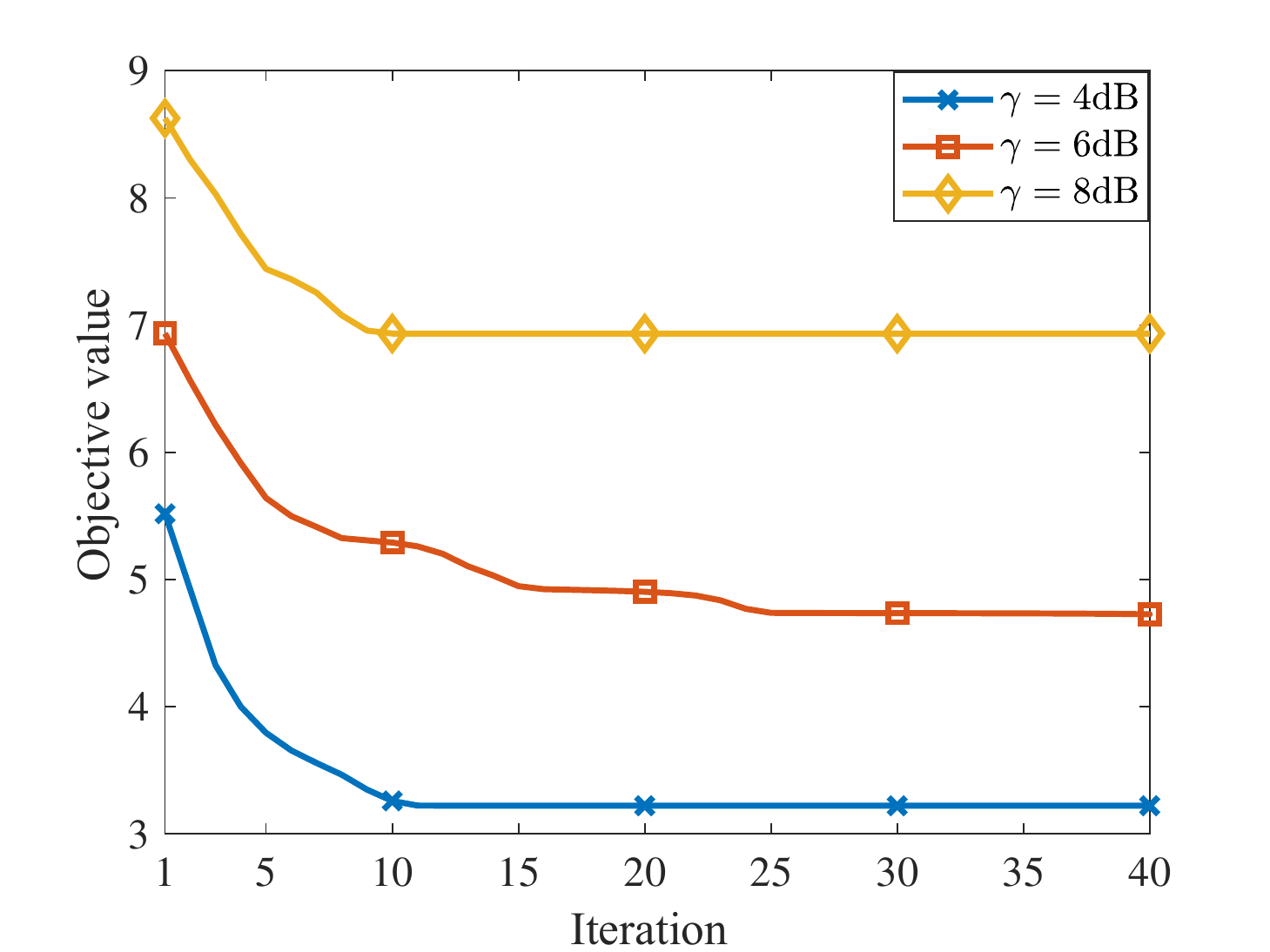}
        \caption{Convergence behavior of the proposed reweighted power minimization approach with different target SINR $\gamma$.}\label{fig:convergence_obj}
\end{figure}

\begin{figure}[h]
        \centering
        \includegraphics[width=\columnwidth]{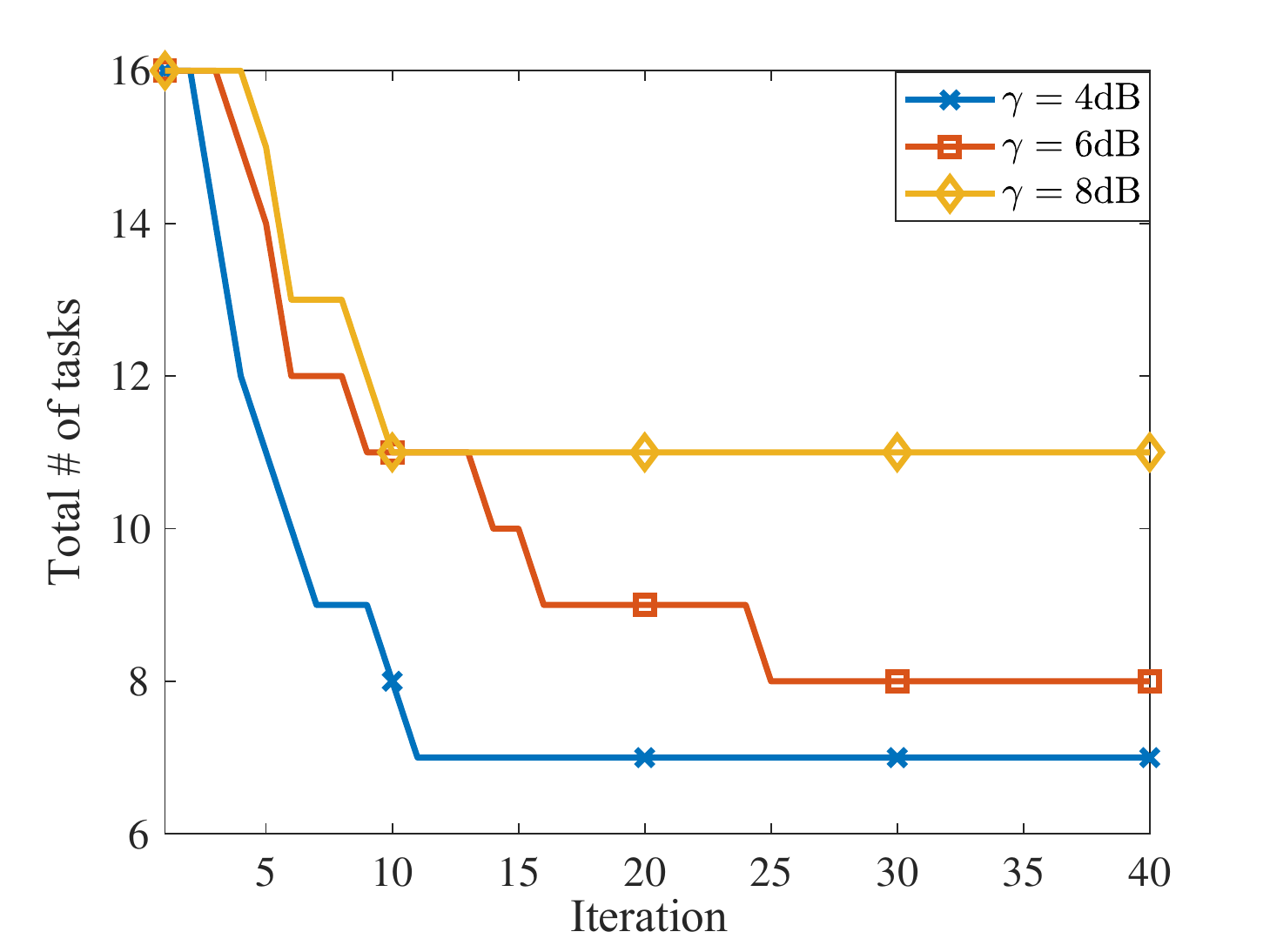}
        \caption{Trajectories of the total number of inference tasks performed at all edge computing nodes with different target SINR $\gamma$.}\label{fig:convergence_num}
\end{figure}

\subsection{Total Power Consumptions over Target SINR}
We then conduct numerical results to compare the performance of different algorithms for problem $\mathscr{P}$ with $D=200$ i.i.d. channel samples, including the proposed reweighted power minimization approach termed as ``reweighted+DC'' and other state-of-the-art algorithms listed below: 
\begin{itemize}
    \item ``mixed $\ell_1/\ell_2$+SDR'': This algorithm is proposed in \cite{shi2015robust_TSP}, which adopts the quadratic variational form of the weighted mixed $\ell_1/\ell_2$-norm for inducing group sparsity and SDR to address the nonconvex quadratic constraints.
    \item ``reweighted+SDR'': To improve the energy efficiency of downlink transmission in cloud-RAN, we adopt the iteratively reweighted minimization algorithm \cite{dai2016energy} for inducing the group sparsity and SDR \cite{luo2007approximation} for the nonconvex quadratic constraints.
    \item ``CB+SDR'': We assume that all tasks are performed at each AP and conduct coordinated beamforming for minimizing the transmission power consumption under probabilistic-QoS constraints.
\end{itemize}

We also set $c=1/\ln(1+\tau^{-1})$ as stated in Sec. \ref{subsec:convergence}. The performances of all algorithms averaged over $100$ channel realizations are illustrated in Fig. \ref{fig:totalpower} and Fig. \ref{fig:num_tasks}. Fig. \ref{fig:totalpower} presents the total power consumption of each algorithm and demonstrates that the proposed DC algorithm yields lower total power consumption than other approaches, which is owed to its better capability to induce group sparsity as shown in Fig. \ref{fig:num_tasks}. 
Note that the total number of tasks for the ``CB+SDR'' algorithm is always $KN=16$.

\begin{figure}[h]
        \centering
        \includegraphics[width=\columnwidth]{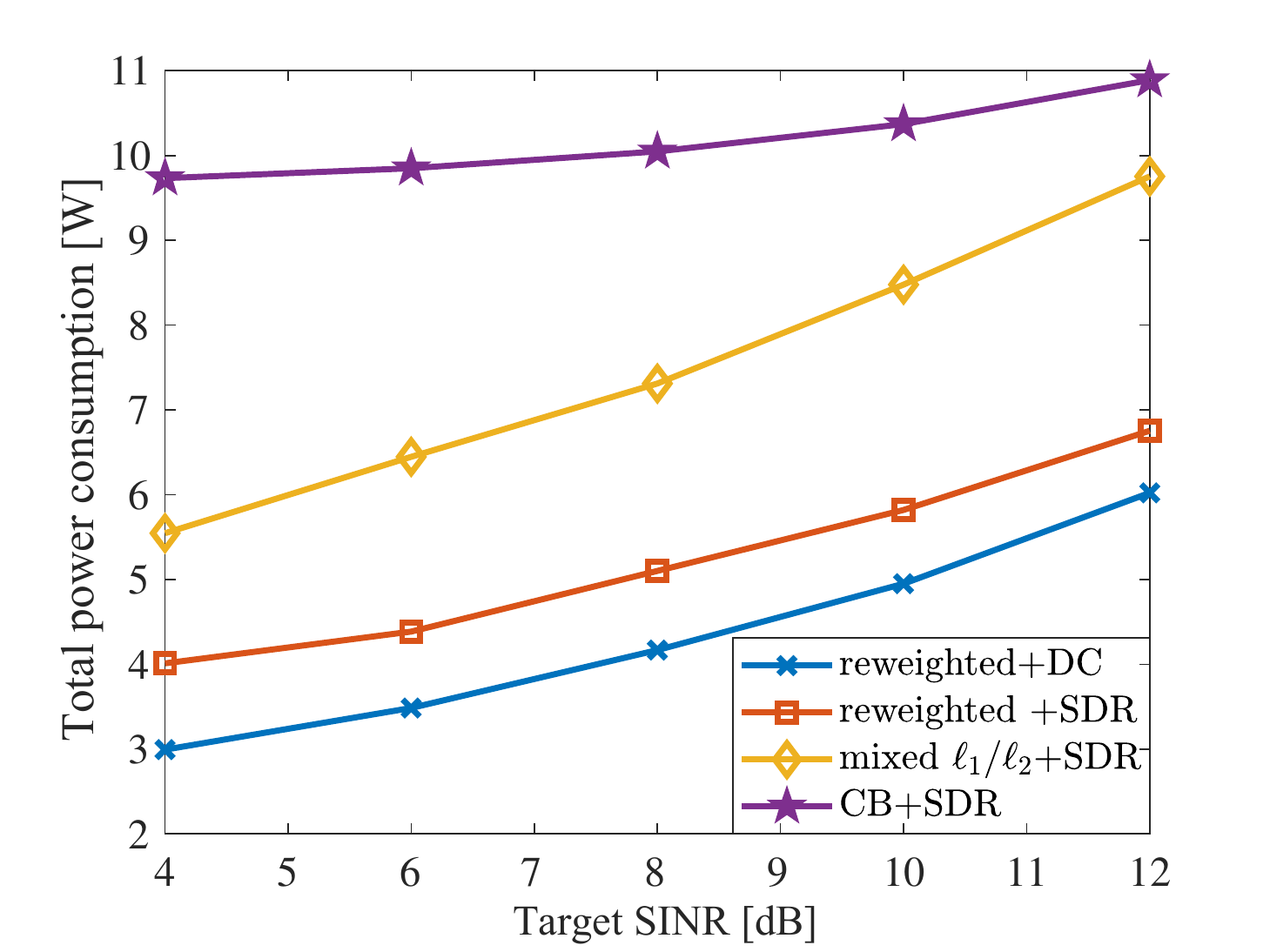}
        \caption{Total power consumption over target SINR.}\label{fig:totalpower}
\end{figure}


\begin{figure}[h]
        \centering
        \includegraphics[width=\columnwidth]{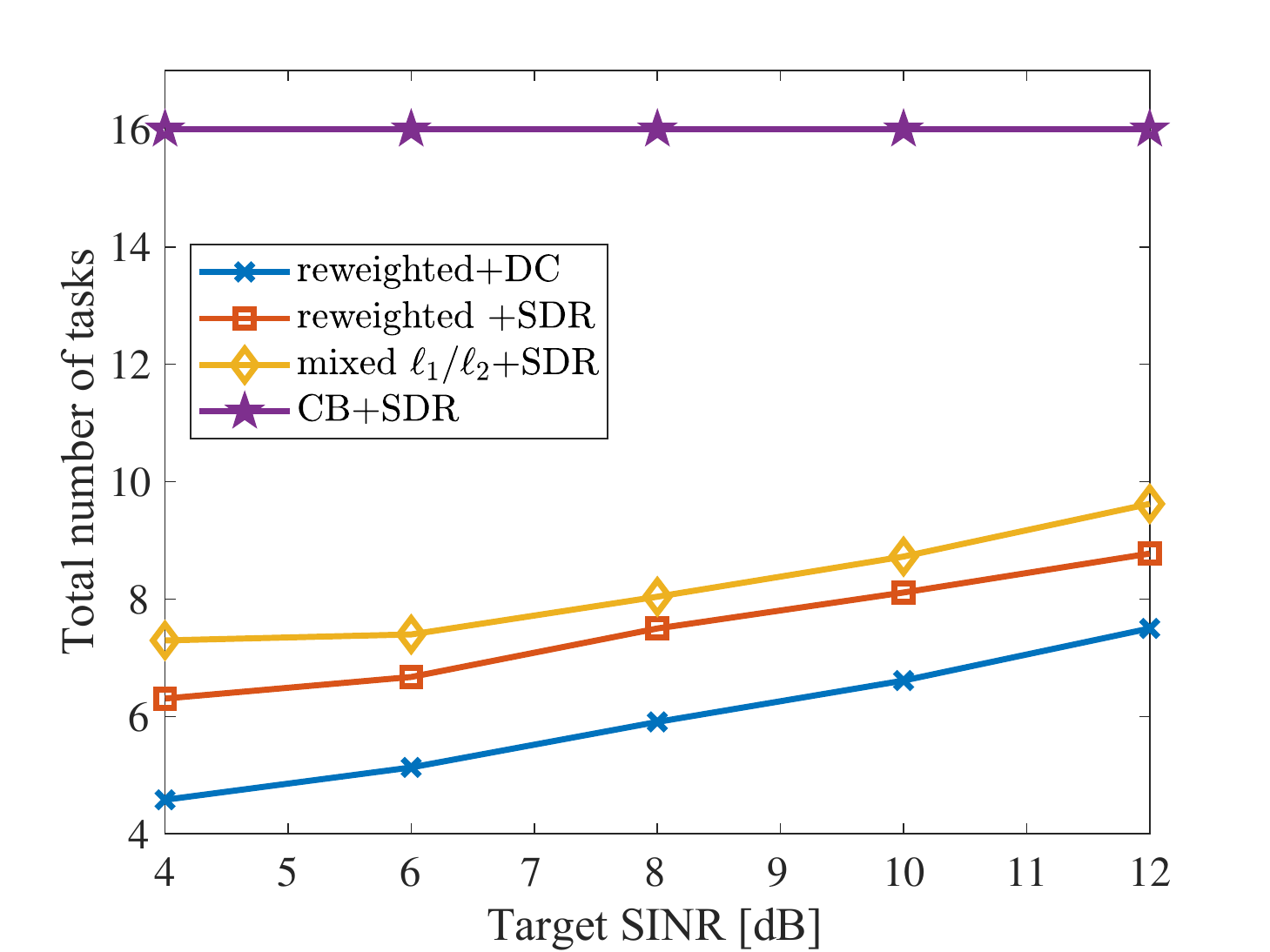}
        \caption{Total \# of tasks performed at APs over target SINR.}\label{fig:num_tasks}
\end{figure}

Through all numerical results, we have seen considerable advantages of the presented statistical learning based robust optimization approximation and 
the proposed reweighted power minimization algorithm in providing energy-efficient processing and robust transmission service for edge inference.

\section{Conclusion}
In this work, we presented an energy-efficient processing and robust cooperative transmission framework for executing deep learning inference tasks for mobile devices. Specifically, we proposed to minimize the sum of computation power and transmission power consumption under the probabilistic-QoS constraints via adaptive task selection and coordinated beamforming design. The joint chance constraints therein were further addressed by a statistical learning based robust optimization approximation approach. This yielded a
group sparse beamforming problem with nonconvex quadratic constraints. We
then developed a reweighted power minimization approach by iteratively solving a DC regularized reweighted $\ell_1$ minimization problem and updating the weights, thereby tackling both the group sparse
objective function and nonconvex quadratic constraints. Numerical results
demonstrated that the proposed approach achieved the lowest total power consumption among other state-of-the-art algorithms, and avoided the drawbacks of other methods for joint chance-constrained programs.

There are still some open problems to be studied:
\begin{itemize}
\item This work considers the architecture that each inference task is performed at multiple base stations separately. An interesting problem is to consider the hierarchical distributed structure of deep neural networks over the cloud and the edge \cite{teerapittayanon2017distributed}.
  \item In this work, we consider a basic ellipsoid model for each uncertain channel coefficient vector. It is interesting to use data-driven approach with more complicated model of the high probability region to reduce its volume, such as clustering the data samples and using a union of ellipsoids as the high probability region. 
  \item It is still an open problem to provide the theoretical guarantee of the proposed reweighted power minimization algorithm since the conditions for convergence guarantee of reweighted approach in \cite{dai2016energy,Yuanming_cvxsmooth18} are not met.
\end{itemize}
\appendices
\section{Derivation of (\ref{constraint:nonconvex_sdp}) Using S-Procedure}\label{append:Sprocedure}
We first rewrite (\ref{eq:uncertainty1}) as
\begin{equation}
  \bs{h}_{k}\tau_k = \hat{\bs{h}}_{k}\tau_k+ \bs{B}_{k}\tilde{\bs{u}}_{k}, \tilde{\bs{u}}_{k}^{\sf{H}}\tilde{\bs{u}}_{k}\leq \tau_{k}^2,
\end{equation}
where $\bs{u}_{k}=\tilde{\bs{u}}_{k}/\tau_{k}\in\mathbb{C}^{L},\tau_k>0$. Let
\begin{equation}
  \bs{x}_{k}=\begin{bmatrix}
    \tau_{k}^{\sf{H}} & \tilde{\bs{u}}_{k}^{\sf{H}}
\end{bmatrix}^{\sf{H}}\in\mathbb{C}^{NL+1},
\end{equation}
we can obtain that 
\begin{equation}
  \bs{h}_k\tau_k = \bs{H}_{k}\bs{x}_k.
\end{equation}
Thus we know
\begin{align}
  &\bs{h}_k^{\sf{H}}(\frac{1}{\gamma_k}\bs{v}_k\bs{v}_k^{\sf{H}}-\sum_{l\ne k} \bs{v}_l\bs{v}_l^{\sf{H}})\bs{h}_k- \sigma_k^2\geq 0 \\
  \Leftrightarrow&(\bs{h}_k\tau_k)^{\sf{H}}(\frac{1}{\gamma_k}\bs{v}_k\bs{v}_k^{\sf{H}}-\sum_{l\ne k} \bs{v}_l\bs{v}_l^{\sf{H}})\bs{h}_k\tau_k- \sigma_k^2\tau_k^2\geq 0 \\
  \Leftrightarrow& (\bs{H}_{k}\bs{x}_k)^{\sf{H}}(\frac{1}{\gamma_k}\bs{v}_k\bs{v}_k^{\sf{H}}-\sum_{l\ne k} \bs{v}_l\bs{v}_l^{\sf{H}})\bs{H}_{k}\bs{x}_k- \sigma_k^2\tau_k^2\geq 0 \\
  \Leftrightarrow&\bs{x}_{k}^{\sf{H}}\bs{P}_k^{0}\bs{x}_{k}\geq 0,
\end{align}
where $\bs{P}_k^{0}\in\mathbb{S}^{NL+1}$ is given by
\begin{equation}
  \bs{H}_{k}^{\sf{H}}(\frac{1}{\gamma_k}\bs{v}_k\bs{v}_k^{\sf{H}}-\sum_{l\ne k} \bs{v}_l\bs{v}_l^{\sf{H}})\bs{H}_{k}-\begin{bmatrix}
    \sigma_k^2 & 0 & \cdots & 0 \\
    0 & 0 & \cdots & 0 \\
    \vdots & \vdots & \ddots & \vdots \\
    0 & 0 & \cdots & 0
  \end{bmatrix}.
\end{equation}
Likewise, $\tilde{\bs{u}}_{k}^{\sf{H}}\tilde{\bs{u}}_{k}\leq \tau_{k}^2,$ can be rewritten as
\begin{equation}
  \bs{x}_{k}^{\sf{H}}\bs{P}_k^{1}\bs{x}_{k}\geq 0,
\end{equation}
where $\bs{P}_k^{1}\in\mathbb{S}^{NL+1}$ is given by
\begin{equation}
  \bs{P}_k^{1}=\begin{bmatrix}
      1 &  \\
      & -\bs{I}_N
  \end{bmatrix}
\end{equation}
Thus, we shall use the S-procedure
\begin{equation}
  \bs{x}_{k}^{\sf{H}}\bs{P}_k^{1}\bs{x}_{k}\geq 0  \Longrightarrow \bs{x}_{k}^{\sf{H}}\bs{P}_k^{0}\bs{x}_{k}\geq 0,
\end{equation}
which is given by
\begin{equation}
  \bs{P}_k^{0}\geq \lambda_{k}\bs{P}_k^{1}, \lambda_{k}\geq 0.
\end{equation}
Therefore, we obtain the tractable reformulation for the joint chance constraints (\ref{constraint:probQoS}) as
\begin{align}
  &\bs{H}_{k}^{\sf{H}}(\frac{1}{\gamma_k}\bs{v}_{k}\bs{v}_{k}^{\sf{H}}-\sum_{l\ne k}\bs{v}_{l}\bs{v}_{l}^{\sf{H}})\bs{H}_{k}\succeq \bs{Q}_k,
\end{align}
where $\bs{\lambda}=[\bs{\lambda}_{1},\cdots,\bs{\lambda}_{K}]=[\lambda_{nk}]\in\mathbb{R}_{+}^{N\times K}$ and $\bs{Q}_k$ is given by
\begin{equation}
  \bs{Q}_k=\begin{bmatrix}
    \lambda_{k}+\sigma_k^2 & \\
    & -\lambda_{k}\bs{I}_{NL} 
  \end{bmatrix}\in\mathbb{C}^{(NL+1)\times(NL+1)}.
\end{equation}

\bibliographystyle{IEEEtran}
\bibliography{reliable_edge_processing}

\end{document}